\theoremstyle{plain}
\newtheorem{teo}{Theorem}[section]
\newtheorem{lem}[teo]{Lemma}
\newtheorem*{teo*}{Teorema}
\theoremstyle{remark}
\newtheorem{oss}{Remark}
\theoremstyle{definition}
\newtheorem*{defi*}{Definition}
\renewcommand{\phi}{\varphi}
\renewcommand{\epsilon}{\varepsilon}
\renewcommand{\bar}{\overline}
\newcommand{\EE}{\mathbb{E}}
\newcommand{\QQ}{\mathbb{Q}}
\newcommand{\RR}{\mathbb{R}}
\newcommand{\FF}{\mathscr{F}}
\newcommand{\GG}{\mathscr{G}}
\newcommand{\HH}{\mathscr{H}}
\renewcommand{\AA}{\mathscr{A}}
\newcommand{\F}{\mathcal{F}}
\newcommand{\II}[1]{{1}_{\{ #1\}}}
\title{{\bf \Large Invariance, existence and uniqueness of solutions of
    nonlinear valuation PDEs and FBSDEs inclusive of credit risk, collateral and
    funding costs}\thanks{The opinions here expressed are solely those of the
    authors and do not represent in any way those of their employers.
    Ackwnowledgements. We are grateful to Cristin Buescu, Jean-Fran\c{c}ois
    Chassagneux, François Delarue and Marek Rutkowski for helfpul discussion and suggestions that helped us improve the paper. Marek Rutkowski and Andrea Pallavicini visits were funded via the EPSRC Mathematics Platform grant EP/I019111/1.}   }
\author{
Damiano Brigo\thanks{Dept. of Mathematics, Imperial College London {\tt damiano.brigo@imperial.ac.uk}}
\ \ \ Marco Francischello\thanks{Dept. of Mathematics, Imperial College London, {\tt m.francischello14@imperial.ac.uk}}  \ \ 
Andrea Pallavicini\thanks{Imperial College London and Banca IMI Milan, {\tt a.pallavicini@imperial.ac.uk}}
}
\date{
\small First Version: February 1, 2014.  This version: \today
}
\begin{document}

\maketitle

\begin{abstract}
We study conditions for existence, uniqueness and invariance of the comprehensive nonlinear valuation equations first introduced in Pallavicini et al (2011) \cite{pallavicini2011funding}. These equations take the form of semi-linear PDEs and Forward-Backward Stochastic Differential Equations (FBSDEs). After summarizing the cash flows definitions allowing us to extend valuation to credit risk and default closeout, including collateral margining with possible re-hypothecation, and treasury funding costs, we show how such cash flows, when present-valued in an arbitrage free setting, lead to semi-linear PDEs or more generally to FBSDEs. We provide conditions for existence and uniqueness of such solutions in a viscosity and classical sense, discussing the role of the hedging strategy. 
We show an invariance theorem stating that even though we start from a risk-neutral valuation approach based on a locally risk-free bank account growing at a risk-free rate, our final valuation equations do not depend on the risk free rate. Indeed, our final semi-linear PDE or FBSDEs and their classical or viscosity solutions depend only on contractual, market or treasury rates and we do not need to proxy the risk free rate with a real market rate, since it acts as an instrumental variable. 
The equations derivations, their numerical solutions, the related XVA valuation adjustments with their overlap, and the invariance result had been analyzed numerically and extended to central clearing and multiple discount curves in a number of previous works, including \cite{pallavicini2011funding}, \cite{pallavicini2012funding},  \cite{PallaviciniBrigo2013multicurva}, \cite{BrigoPallavicini2014} and \cite{BrigoPallaviciniPedersen}. 
\end{abstract}

\medskip

\medskip

\noindent\textbf{AMS Classification Codes}: 35K58, 60H30, 91B70 \newline
\textbf{JEL Classification Codes}: G12, G13 


\noindent \textbf{Keywords}: Counterparty Credit Risk, Funding Valuation Adjustment, Funding Costs, Collateralization, Non-linearity Valuation Adjustment, Nonlinear Valuation, Derivatives Valuation, semi-linear PDE, FBSDE, BSDE, Existence and Uniqueness of solutions, Viscosity Solutions.

\pagestyle{myheadings} \markboth{}{{\footnotesize  Brigo, D., Francischello M., and Pallavicini, A. Invariance, Existence and Uniqueness of  Nonlinear Valuation Eqs}}


\section{Introduction}
This is a technical paper where we analyze in detail invariance, existence and uniqueness of solutions for nonlinear valuation equations inclusive of credit risk, collateral margining with possible re-hypothecation, and funding costs.
In particular, we study conditions for existence, uniqueness and invariance of the comprehensive nonlinear valuation equations first introduced in Pallavicini et al (2011) \cite{pallavicini2011funding}. After briefly summarizing the cash flows definitions allowing us to extend valuation to default closeout, collateral margining with possible re-hypothecation and treasury funding costs, we show how such cash flows, when present-valued in an arbitrage free setting, lead straightforwardly to semi-linear PDEs or more generally to FBSDEs. We study conditions for existence and uniqueness of such solutions in a viscosity or classical sense. 
 We formalize an invariance theorem showing that even though we start from a risk-neutral valuation approach based on a locally risk-free bank account growing at a risk-free rate, our final valuation equations do not depend on the risk free rate at all. In other words, we do not need to proxy the risk-free rate with any actual market rate, since it acts as an instrumental variable that does not manifest itself in our final valuation equations. 
Indeed, our final semi-linear PDEs or FBSDEs and their classical or viscosity solutions depend only on contractual, market or treasury rates and contractual closeout specifications once 
we use an hedging strategy that is defined as a straightforward generalization of the natural delta hedging in the classical setting. 

The equations derivations, their numerical solutions and the invariance result had been analyzed numerically and extended to central clearing and multiple discount curves in a number of previous works, including
\cite{pallavicini2011funding}, \cite{pallavicini2012funding},  \cite{PallaviciniBrigo2013multicurva}, \cite{BrigoPallavicini2014},  \cite{BrigoPallaviciniPedersen}, and the monograph \cite{BrigoMoriniPallavicini2012}, which further summarizes earlier credit and debit valuation adjustments (CVA and DVA) results. We refer to such works and references therein for a general introduction to comprehensive nonlinear valuation and to the related issues with valuation adjustments related to credit (CVA), collateral (LVA) and funding costs (FVA).  In this paper, given the technical nature of our investigation and the emphasis on nonlinear valuation,  we refrain from decomposing the nonlinear value into valuation adjustments or XVAs. Moreover, in practice such separation is possible only under very specific assumptions while in general all terms depend on all risks due to non-linearity. Forcing separation may lead to double counting, as initially analyzed through the Nonlinearity Valuation Adjustment (NVA) in \cite{BrigoPallaviciniPedersen}. Separation is discussed in the CCP setting in \cite{BrigoPallavicini2014}. 

The paper is structured as follows.

Section \ref{sec:probabilistic} introduces the probabilistic setting, the cash flows analysis and derives a first valuation equation based on conditional expectations. 
Section \ref{sec:fbsde} derives a FBSDE under the default-free filtration from the initial valuation equation under assumptions of conditional independence of default times and of default-free initial portfolio cash flows. Section \ref{sec:markovian} specifies the FBSDE obtained earlier to a Markovian setting, and derives a semi-linear PDE by assuming regularity of the FBSDE solution. 
Section \ref{sec:fbsdeexun} studies conditions for existence and uniqueness of solutions for the nonlinear valuation FBSDE and classical or viscosity solutions to the associated PDE.
 Section \ref{sec:invariance} presents the invariance theorem:  when adopting delta-hedging, the solution does not depend on the risk-free rate. Section \ref{sec:conclu} re-writes the valuation equation in a way that resembles a risk neutral expectation  while highlighting the key differences and concludes the paper. 

\section{Cash flows analysis and first valuation equation}\label{sec:probabilistic}
We fix a filtered probability space $(\Omega, \AA, \QQ)$, with a filtration $(\GG_u)_{u\geq 0}$ representing the evolution of all the available information on the market. 
With an abuse of notation, we will refer to $(\GG_u)_{u\geq 0}$ by $\GG$. 
The object of our investigation is a portfolio of contracts, or ``contract" for brevity, typically a netting set, with final maturity $T$,
 between two financial entities, the investor $I$ and the counterparty $C$. Both $I$ and $C$ are supposed to
 be subject to default risk. In particular we model their default times with two $\GG$-stopping times $\tau_I,\tau_C$. We assume that the stopping times are generated
by Cox processes of intensities $\lambda^I$ and $\lambda^C$. Furthermore we describe the \emph{default-free} information by means of a 
filtration $(\FF_u)_{u\geq 0}$  generated by the price of the underlying $S_t$ of our contract. This process has the following dynamic under the the measure $\QQ$:
\[
dS_t=r_tS_tdt+\sigma(t,S_t)dW_t
\] 

where $r_t$ is an $\FF$-adapted process, called the \emph{risk-free} rate. We then suppose the existence of a risk-free account $B_t$ following the dynamics
\[
dB_t=r_tB_tdt.
\]
We denote $D(s,t,x)=e^{-\int_s^tx_udu}$ the discount factor associated to the rate $x_u$.
In the case of the risk-free rate we define $D(s,t)\coloneqq D(s,t,r)$.

We further assume that for all $t$ we have $\GG_t=\FF_t\vee \HH^I_t\vee \HH_t^C$ where 
\begin{equation*}
\begin{aligned}
\HH_t^I=\sigma(\II{\tau_I\leq s}, \ s\leq t),\\
\HH_t^C=\sigma(\II{\tau_C\leq s}, \ s\leq t).
\end{aligned}
\end{equation*}
Again we indicate  $(\FF_u)_{u\geq 0}$ by $\FF$ and we will write $\EE^{\GG}_t[\cdot] \coloneqq \EE[\cdot | \GG_t]$ and similarly for $\FF$.
Moreover we postulate the default times to be \emph{conditionally independent} with respect to $\FF$, as in the classic framework of Duffie and Huang \cite{DuffieHuang}, and we indicate $\tau=\tau_I\wedge\tau_C$.
With these assumptions we have that the stopping time $\tau$ has intensity $\lambda_u=\lambda_u^I+\lambda_u^C$. 

For convenience of notation we use the symbol $\bar{\tau}$ to indicate the minimum between $\tau$ and $T$.

\begin{oss}
We suppose that the measure $\QQ$ is the so called \emph{risk-neutral} measure, i.e. a measure under which the prices of the traded non-dividend-paying assets
discounted at the risk-free rate are martingales or, in equivalent terms, the measure associated with the numeraire $B_t$.
\end{oss}

\subsection{The Cash Flows}

To price this portfolio we take the conditional expectation of all the cash flows of the portfolio and discount them at the risk-free rate. An alternative to the explicit cash flows approach adopted here is discussed in \cite{BieleckiRutkowski2014}.

To begin with, we consider a collateralized hedged contract, so the cash flows generated by the contract are: 

\begin{itemize}
\item The payments due to the contract itself, modeled by an $\FF$-predictable process $\pi_t$ of finite variation and a final cash flow
$\Phi(S_T)$ payed at maturity modeled by a Lipschitz function $g$.
At time $t$ the cumulated discounted flows due to these components amount to
\[
\II{\tau>T}D(0,T)\Phi(S_T)+\int_t^{\bar{\tau}}D(t,u)\pi_udu.
\]
\item The payments due to default, in particular we suppose that at time $\tau$ we have a cash flow due to the
default event (if it happened) modeled by a $\GG_\tau$-measurable random variable $\theta_\tau$. So the flows due to this component
are
\[
\II{t<\tau<T}D(t,\tau)\theta_\tau=\II{t<\tau<T}\int_t^TD(t,u)\theta_u d\II{\tau\leq u}.
\]
\item The payments due to the collateral account, more precisely we model this account by a $\FF$-predictable process $C_t$. We postulate that
$C_t>0$ if the investor is the collateral taker, and $C_t<0$ if the investor is the collateral provider. Moreover we assume that
the collateral taker remunerates the account at a certain interest rate (written on the CSA), in particular we may have different rates 
depending on who is the collateral taker, so we introduce the rate
\begin{equation}\label{collateralrate}
c_t=\II{C_t>0}c_t^++\II{C_t\leq 0}c_t^- \ ,
\end{equation}
where $c_t^+,c_t^-$ are two $\FF$-predictable processes. We also suppose that the collateral can be re-hypotecated, i.e. the collateral taker can use the collateral
for funding purposes. Since the collateral taker has to remunerate the account at the rate $c_t$ 
the discounted flows due to the collateral can be expressed as a cost of carry and sum up to
\[
\int_t^{\bar{\tau}}D(t,u)(r_u-c_u)C_udu.
\]
\item We suppose that the deal we are considering is to be hedged by a position in cash and risky assets, represented 
respectively by the $\GG$-adapted processes $F_t$ and $H_t$, with the convention that $F_t>0$ means that the investor is borrowing money
(from the bank's treasury for example), while $F<0$ means that $I$ is investing money. Also in this case to take into account
different rates in the borrowing or lending case we introduce the rate
\begin{equation}\label{fundingrate}
f_t=\II{V_t-C_t>0}f_t^++\II{V_t-C_t\leq 0}f_t^-.
\end{equation}
The flows due to the funding part are
\[
\int_t^{\bar{\tau}}D(t,u)(r_u-f_u)F_udu.
\]
For the flows related to the risky assets account $H_t$, we have that $H_t>0$ means that we need some risky asset,
so we borrow it, while if $H<0$ we lend it. So, for example, if we need to borrow the risky asset we need cash from the treasury,
hence we borrow cash at a rate $f_t$ and as soon as we have the asset we can repo lend it at a rate $h_t$. In general $h_t$ is defined as
\begin{equation}\label{hedgingrate}
h_t=\II{H_t>0}h^+_t+\II{H_t\leq 0}h^-_t.
\end{equation}
Thus we have that the total discounted cash flows for the risky part of the hedge are equal to
\[
\int_t^{\bar{\tau}}D(t,u)(h_u-f_u)H_udu.
\]
\end{itemize}
The last expression could also be seen as resulting from $(r-f) - (r-h)$, in line with the previous definitions. 
If we add all the cash flows mentioned above we obtain that the value of the contract $V_t$ must satisfy
\begin{equation}
\begin{aligned}
V_t=&\EE^{\GG}_t\left[\II{\tau>T}D(t,T)\Phi(S_T)+ \int_t^{\bar{\tau}}D(t,u)(\pi_u+(r_u-c_u)C_u+(r_u-f_u)F_u-(f_u-h_u)H_u)du\right]\\
&+\EE^{\GG}_t\bigg[D(t,\tau)\II{t<\tau<T}\theta_\tau\bigg].
\end{aligned}
\end{equation}
  If we further suppose that we are able to replicate the value of our contract using the funding, the collateral (assuming re-hypothecation, otherwise $C$ is to be omitted from the following equation) and the risky asset accounts, i.e.
\begin{equation}\label{eq:replica}
V_u=F_u+H_u+C_u,
\end{equation}
we have, substituting for $F_u$:
\begin{equation}\label{eq:expected}
\begin{aligned}
V_t=&\EE^{\GG}_t\left[\II{\tau>T}D(t,T)\Phi(S_T)+\int_t^{\bar{\tau}}D(t,u)(\pi_u+(f_u-c_u)C_u+(r_u-f_u)V_u-(r_u-h_u)H_u)du\right]\\
&+\EE^{\GG}_t\bigg[D(t,\tau)\II{t<\tau<T}\theta_\tau\bigg].
\end{aligned}  
\end{equation}

\begin{oss}
In the classic no-arbitrage theory and in a complete market
setting, without credit risk, the hedging process $H$ would correspond 
to a delta hedging strategy account. Here we do not enforce this interpretation yet. 
However we will see that a delta-hedging interpretation emerges from
the combined effect of working under the default-free filtration
$\FF$ (valuation under partial information) and of identifying
part of the solution of the resulting BSDE, under reasonable
regularity assumptions, as a sensitivity of the value to the
underlying asset price $S$.
\end{oss}

\subsection{Adjusted cash flows under a simple trading model}

We now show how the adjusted cash flows originate assuming we buy a call option on an equity asset $S_T$ with strike $K$.
We analyze the operations a trader would enact with the treasury and the repo market in order to fund the
 trade, and we map these operations to the related cash flows. 
We go through the following steps in each small interval $[t,t+dt]$, seen from the point of view of the trader/investor buying the option.
This is written in first person for clarity and is based on conversations with traders working with their bank treasuries.  

\medskip

\noindent Time $t$:
\begin{enumerate}

\item I wish to buy a call option with maturity $T$ whose current price is $V_t = V(t,S_t)$.
I need $V_t$ cash to do that. So I borrow $V_t$ cash from my bank treasury and buy the call.
\item I receive the collateral amount $C_t$ for the call, that I give to the treasury.

\item Now I wish to hedge the call option I bought. To do this, I plan to repo-borrow $\Delta_t = \partial_S V_t$ stock on the repo-market.

\item To do this, I borrow $H_t = \Delta_t S_t$ cash at time $t$ from the treasury.

\item I repo-borrow an amount $\Delta_t$ of stock, posting cash $H_t$ as a guarantee.

\item I sell the stock I just obtained from the repo to the market, getting back the price $H_t$ in cash.
\item I give $H_t$ back to treasury. 
\item My outstanding debt to the treasury is $V_t-C_t$.
\end{enumerate}

\medskip

Time $t+dt$:
\begin{enumerate}
\setcounter{enumi}{8}
\item  I need to close the repo. To do that I need to give back $\Delta_t$ stock. I need to buy this stock from the market. To do that I need $\Delta_t
S_{t+dt}$ cash.
\item I thus borrow $\Delta_t S_{t+dt}$ cash from the bank treasury.
\item I buy $\Delta_t$ stock and I give it back to close the repo and I get back the cash $H_t$ deposited at time
$t$ plus interest $h_tH_t$.
\item I give back to the treasury the cash $H_t$ I just obtained, so that the net value of the repo operation has been
\[ H_t(1+h_t \, dt ) - \Delta_t S_{t+dt} = - \Delta_t \, dS_t + h_t H_t \, dt\]
Notice that this $- \Delta_t dS_t$ is the right amount I needed to hedge
$V$ in a classic delta hedging setting.
\item I close the derivative position, the call option, and get $V_{t+dt}$ cash.

\item I have to pay back the collateral plus interest, so I ask the treasury the amount $C_t(1+c_t\, dt)$ that I give back to the counterparty.

\item My outstanding debt plus interest (at rate $f$) to the treasury is\\ $V_t-C_t+C_t(1+c_t\, dt)+(V_t-C_t)f_t\, dt=V_t(1+f_t\, dt)+C_t(c_t-f_t\, dt)$.\\ I then give to the treasury the cash $V_{t+dt}$ I just obtained,
the net effect being
\[ V_{t+dt} - V_t(1+f_t\, dt)-C_t(c_t-f_t)\, dt = dV_t - f_t V_t \, dt-C_t(c_t-f_t) \, dt\]
\item I now have that the total amount of flows is :
\[ - \Delta_t \, dS_t + h_t H_t \, dt+dV_t - f_t V_t \, dt-C_t(c_t-f_t) \, dt\]
\item Now I present--value the above flows in $t$ in a risk neutral setting.
\[ {\EE}_t [ - \Delta_t \, dS_t + h_t H_t \, dt+dV_t - f_t V_t \, dt-C_t(c_t-f_t) \, dt] =
 - \Delta_t  (r_t - h_t) S_t\, dt + (r_t - f_t) V_t \, dt-C_t(c_t-f_t) \, dt-d \varphi(t) \]
\[ = -H_t(r_t - h_t)\, dt + (r_t - f_t) (H_t+F_t+C_t) \, dt-C_t(c_t-f_t) \, dt-d \varphi(t)\]
\[= (h_t-f_t)H_t\, dt+(r_t - f_t)F_t\, dt+(r_t-c_t)C_t\, dt -d \varphi(t)\]
This derivation holds assuming that ${\EE}_t [ dS_t] = r_t S_t \, dt$ and ${\EE}_t [ dV_t] = r_t V_t \, dt - d \varphi(t)$, where $d \varphi$ is a dividend of $V$ in $[t,t+dt)$ expressing the funding costs. 
Setting the above expression to zero we obtain
\[ d \varphi(t) = (h_t-f_t)H_t\, dt+(r_t - f_t)F_t\, dt+(r_t-c_t)C_t\, dt \]
which coincides with the definition given earlier in \eqref{eq:expected}. 

\end{enumerate}

\section{A FBSDE under $\FF$}\label{sec:fbsde}

We aim to switch to the default free filtration $\FF=(\FF_t)_{t\geq 0}$, and the following lemma
 (taken from Bielecki and Rutkowski \cite{bilrutcredit} Section $5.1$) is the key in understanding how the information 
expressed by $\GG$ relates to the one expressed by $\FF$.  

\begin{lem}\label{lemma:fondamentale}
  For any $\AA$-measurable random variable $X$ and any $t\in \RR_+$, we have:
  \begin{equation}\label{switch}
    \EE_t^{\GG}[\II{t<\tau\leq s}X]=\II{\tau>t}
\frac{\EE_t^{\FF}[\II{\II{t<\tau\leq s}}X]}{\EE_t^{\FF}[\II{\tau>t }]}.
  \end{equation}
  In particular we have that for any $\GG_t$-measurable random variable $Y$ there exists an $\FF_t$-measurable random variable $Z$ such that 
  \[
  \II{\tau>t}Y=\II{\tau>t}Z.
  \]
\end{lem}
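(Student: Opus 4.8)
The plan is to prove the ``in particular'' structural statement first—it isolates the only place where the special form of $\GG$ matters—and then to deduce the switching formula \eqref{switch} from it by verifying the defining property of conditional expectation.

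First I would show that for every $A\in\GG_t$ there exists $A_F\in\FF_t$ with $A\cap\{\tau>t\}=A_F\cap\{\tau>t\}$. Let $\mathscr{C}$ be the family of $A\in\GG_t$ enjoying this property. One checks directly that $\mathscr{C}$ is a $\sigma$-algebra, since complements and countable unions are preserved after intersecting with $\{\tau>t\}$. It contains $\FF_t$ (take $A_F=A$), and it contains the generators of $\HH_t^I$ and $\HH_t^C$: for $u\le t$ we have $\{\tau>t\}=\{\tau_I>t\}\cap\{\tau_C>t\}\subseteq\{\tau_I>u\}\cap\{\tau_C>u\}$, so that $\{\tau_I\le u\}\cap\{\tau>t\}=\{\tau_C\le u\}\cap\{\tau>t\}=\emptyset$, and these generators lie in $\mathscr{C}$ with $A_F=\emptyset$. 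Hence $\mathscr{C}\supseteq\FF_t\vee\HH_t^I\vee\HH_t^C=\GG_t$, i.e. $\mathscr{C}=\GG_t$. For an indicator $Y=\II{A}$ this gives $\II{\tau>t}\II{A}=\II{\tau>t}\II{A_F}$, and approximating a general $\GG_t$-measurable $Y$ by simple functions and passing to monotone limits upgrades the identity to the stated form: there is an $\FF_t$-measurable $Z$ with $\II{\tau>t}Y=\II{\tau>t}Z$.

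With the structural fact in hand I would verify \eqref{switch}. Set $G_t:=\EE^{\FF}_t[\II{\tau>t}]$; in the Cox setting $G_t=e^{-\int_0^t\lambda_u\,du}>0$ almost surely, so the quotient is well defined, and the right-hand side of \eqref{switch} is $\GG_t$-measurable because $\II{\tau>t}$ is $\GG_t$-measurable while the fraction is $\FF_t$-measurable. It then suffices to check $\EE[\II{A}\,\II{t<\tau\le s}X]=\EE[\II{A}\cdot\text{RHS}]$ for every $A\in\GG_t$. Both sides vanish off $\{\tau>t\}$—the factor $\II{t<\tau\le s}$ on the left and $\II{\tau>t}$ on the right—so by the structural fact I may replace $A$ by some $A_F\in\FF_t$. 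Pulling the $\FF_t$-measurable factors out and using $\EE^{\FF}_t[\II{\tau>t}]=G_t$ together with the tower property yields
\[
\EE\!\left[\II{A_F}\,\II{\tau>t}\,\frac{\EE^{\FF}_t[\II{t<\tau\le s}X]}{G_t}\right]
=\EE\!\left[\II{A_F}\,\EE^{\FF}_t[\II{t<\tau\le s}X]\right]
=\EE\!\left[\II{A_F}\,\II{t<\tau\le s}X\right],
\]
which is exactly the left-hand side, proving \eqref{switch}.

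The main obstacle is the structural reduction in the first step: making precise that, on the no-default-yet event $\{\tau>t\}$, the enlarged filtration $\GG_t$ carries no information beyond $\FF_t$. Once this is settled through the $\sigma$-algebra argument above, the remainder is a routine verification of the conditional-expectation property on the generating class $\{A_F\cap\{\tau>t\}:A_F\in\FF_t\}$, the only analytic caveats being the integrability of $X$ (so that all conditional expectations are finite) and the strict positivity of $G_t$ that legitimizes the division. I would also remark that the same computation works verbatim with $\{t<\tau\le s\}$ replaced by any event of the form $\{\tau>t\}\cap B$, $B\in\AA$.
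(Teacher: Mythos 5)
Your proof is correct. Note that the paper offers no proof of this lemma to compare against: it is imported verbatim from Bielecki and Rutkowski (Section 5.1 of the cited monograph), so the relevant benchmark is the standard argument from that source — and yours is essentially it. You prove the trace-$\sigma$-algebra fact $\GG_t\cap\{\tau>t\}=\FF_t\cap\{\tau>t\}$ first (your class $\mathscr{C}$), correctly adapting it to the two-default setting $\GG_t=\FF_t\vee\HH^I_t\vee\HH^C_t$ by checking that the generators $\{\tau_I\le u\}$, $\{\tau_C\le u\}$, $u\le t$, have empty intersection with $\{\tau>t\}=\{\tau_I>t\}\cap\{\tau_C>t\}$; this reversal of the statement's order (structural fact first, then \eqref{switch}) is the right logical order, since the structural fact is what allows test sets in $\GG_t$ to be replaced by sets in $\FF_t$. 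The only stylistic difference from Bielecki--Rutkowski is that they identify the $\FF_t$-measurable version of $\EE^{\GG}_t[\II{\tau>t}Y]$ by projecting onto $\FF_t$ and solving for it, whereas you verify directly that the explicit right-hand side of \eqref{switch} is $\GG_t$-measurable and satisfies the defining integral identity; the two routes are interchangeable and rest on the same reduction. Your caveats — integrability of $X$, and strict positivity of $\EE^{\FF}_t[\II{\tau>t}]=e^{-\int_0^t\lambda_u\,du}$, which the Cox-process assumption guarantees so that the quotient and the indicator-free cancellation are legitimate — are exactly the points that need flagging, and you correctly read the doubled indicator in \eqref{switch} as a typographical slip.
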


What follows is an application of the previous lemma exploiting the fact that we have to deal with a stochastic process structure and not only a simple random variable. Similar 
results are illustrated in \cite{bjrcredit}.

\begin{lem}\label{lemma:scambio}
Suppose that $\phi_u$ is a $\GG$-adapted process. We consider a default time $\tau$ with intensity 
$\lambda_u$. If we denote $\bar{\tau}=\tau\wedge T$ we have:

\begin{equation*}
\EE^{\GG}_t\left[ \int_t^{\bar{\tau}}\phi_udu\right]=\II{\tau>t}\EE^{\FF}_t\left[\int_t^TD(t,u,\lambda)\widetilde{\phi_u}du\right]
\end{equation*}
where $\widetilde{\phi_u}$ is an $\FF_u$ measurable variable such that $\II{\tau>u}\widetilde{\phi_u}=\II{\tau>u}\phi_{u}$.
\end{lem}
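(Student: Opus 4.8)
The plan is to pass from $\GG$ to $\FF$ pointwise in the time variable $u$ and then integrate, so that the whole argument reduces to evaluating $\EE^\GG_t[\II{\tau>u}\phi_u]$ for a single $u\in[t,T]$. First I would rewrite the stochastic upper limit as a survival indicator: since $\bar\tau=\tau\wedge T$, on the integration window $[t,T]$ one has $\int_t^{\bar\tau}\phi_u\,du=\int_t^T\II{\tau>u}\phi_u\,du$. Assuming enough integrability to apply Fubini--Tonelli for conditional expectations, I would then interchange $\EE^\GG_t$ with the $du$-integral to obtain $\EE^\GG_t\big[\int_t^{\bar\tau}\phi_u\,du\big]=\int_t^T\EE^\GG_t[\II{\tau>u}\phi_u]\,du$.

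The core step is the evaluation of the integrand. I would use the survival-indicator companion of Lemma \ref{lemma:fondamentale}, namely $\EE^\GG_t[\II{\tau>u}X]=\II{\tau>t}\,\EE^\FF_t[\II{\tau>u}X]/\EE^\FF_t[\II{\tau>t}]$ for $u\ge t$, which is obtained from \eqref{switch} via the decomposition $\II{\tau>u}=\II{\tau>t}-\II{t<\tau\le u}$ together with the $\GG_t$-measurability of $\II{\tau>t}$. Applying it with $X=\phi_u$ gives $\EE^\GG_t[\II{\tau>u}\phi_u]=\II{\tau>t}\,\EE^\FF_t[\II{\tau>u}\phi_u]/\EE^\FF_t[\II{\tau>t}]$. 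Since $\phi_u$ is $\GG_u$-measurable, the second part of Lemma \ref{lemma:fondamentale} (applied at time $u$ with $Y=\phi_u$) supplies the $\FF_u$-measurable $\widetilde{\phi_u}$ with $\II{\tau>u}\widetilde{\phi_u}=\II{\tau>u}\phi_u$, so I can replace $\phi_u$ by $\widetilde{\phi_u}$ inside the $\FF$-expectation.

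Next I would produce the intensity discount factor from the Cox-process structure. Conditioning on $\FF_u\supseteq\FF_t$ and using that $\widetilde{\phi_u}$ is $\FF_u$-measurable, the tower property yields $\EE^\FF_t[\II{\tau>u}\widetilde{\phi_u}]=\EE^\FF_t[\widetilde{\phi_u}\,\QQ(\tau>u\mid\FF_u)]$; the Cox assumption gives $\QQ(\tau>u\mid\FF_u)=e^{-\int_0^u\lambda_s\,ds}$, while $\EE^\FF_t[\II{\tau>t}]=\QQ(\tau>t\mid\FF_t)=e^{-\int_0^t\lambda_s\,ds}$. Factoring $e^{-\int_0^u\lambda_s\,ds}=e^{-\int_0^t\lambda_s\,ds}\,D(t,u,\lambda)$ and pulling the $\FF_t$-measurable factor $e^{-\int_0^t\lambda_s\,ds}$ out of $\EE^\FF_t$, the two survival factors $e^{-\int_0^t\lambda_s\,ds}$ cancel between numerator and denominator, leaving $\EE^\GG_t[\II{\tau>u}\phi_u]=\II{\tau>t}\,\EE^\FF_t[D(t,u,\lambda)\widetilde{\phi_u}]$. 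Integrating in $u$ over $[t,T]$ and applying Fubini once more delivers the stated identity.

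The main obstacle I anticipate is not any single identity but the bookkeeping that makes the intensity discounting appear: one must be careful that $\QQ(\tau>u\mid\FF_u)=e^{-\int_0^u\lambda_s\,ds}$ (valid here precisely because $\tau$ is generated by a Cox process, equivalently conditionally independent given $\FF$), and that the normalizing survival factor evaluated at $t$ cancels exactly, so that only the increment $D(t,u,\lambda)$ survives and no spurious $e^{-\int_0^t\lambda}$ remains. A secondary technical point is the justification of the two Fubini interchanges, which rests on an implicit integrability hypothesis on $\phi$.
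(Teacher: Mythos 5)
Your proposal is correct and takes essentially the same route as the paper's own proof: rewrite the stochastic upper limit via survival indicators, interchange $\EE^\GG_t$ with the $du$-integral, pass from $\GG$ to $\FF$ with the key lemma, substitute the $\FF_u$-measurable version $\widetilde{\phi_u}$, and use the tower property together with the Cox survival probability $\QQ(\tau>u\mid\FF_u)=e^{-\int_0^u\lambda_s\,ds}$ so that the time-$t$ survival factors cancel and only $D(t,u,\lambda)$ remains. The sole (cosmetic) difference is bookkeeping: the paper applies Lemma \ref{lemma:fondamentale} directly to $\II{\tau>t}\II{\tau>u}\phi_u$, whereas you first derive a ``companion'' identity for $\EE^\GG_t[\II{\tau>u}X]$ — and note that your sketched derivation of it is slightly circular, since the term $\EE^\GG_t[\II{\tau>t}X]$ requires the general Bielecki--Rutkowski key lemma rather than just the $\GG_t$-measurability of $\II{\tau>t}$, which is the same liberty the paper itself takes with its stated lemma.
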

\begin{proof}

\[
\EE^{\GG}_t\left[ \int_t^{\bar{\tau}}\phi_udu\right]=\EE^{\GG}_t\left[ \int_t^T\II{\tau>t}\II{\tau>u}\phi_udu\right]=\int_t^T\EE^{\GG}_t\left[\II{\tau>t}\II{\tau>u}\phi_u\right]du
\]
then by using Lemma \ref{lemma:fondamentale} we have
\[
=\int_t^T\II{\tau>t}\frac{\EE^{\FF}_t\left[ \II{\tau>t}\II{\tau>u}\phi_u\right]}{\QQ[\tau>t \ | \FF_t]}du=\II{\tau>t}\int_t^T\EE^{\FF}_t\left[ \II{\tau>u}\phi_u\right]D(0,t,\lambda)^{-1}
du
\]
now we choose an $\FF_u$ measurable variable such that $\II{\tau>u}\widetilde{\phi_u}=\II{\tau>u}\phi_{u}$ and obtain
\begin{equation*}
\begin{aligned}
&=\II{\tau>t}\int_t^T\EE^{\FF}_t\left[\EE^{\FF}_u\left[ \II{\tau>u}\right]\widetilde{\phi_u}\right]D(0,t,\lambda)^{-1}du=\II{\tau>t}\int_t^T\EE^{\FF}_t\left[D(0,u,\lambda)\widetilde{\phi_u}\right]D(0,t,\lambda)^{-1}du\\
&=\II{\tau>t}\EE^{\FF}_t\left[\int_t^TD(t,u,\lambda)\widetilde{\phi}_udu\right]
\end{aligned}
\end{equation*}
\end{proof}

A similar result will enable us to deal with the default cash flow term. In fact we have the following (Lemma 3.8.1 in \cite{bjrcredit})

\begin{lem}\label{lem:localization}
Suppose that $\phi_u$ is an $\FF$-predictable process. We consider two conditionally independent default times
$\tau_I,\tau_C$ generated by Cox processes with $\FF$-intensity rates
$\lambda^I_t,\lambda^C_t$. If we denote $\tau=\tau_C\wedge \tau_I$ we have:
\begin{align*}
\EE^{\GG}_t\left[ \II{t<\tau<T} \II{\tau_I<\tau_C}\phi_\tau\right]=\II{\tau>t}\EE_t^\FF\left[\int_t^TD(t,u,\lambda^I+\lambda^C)\lambda^I_u\phi_udu\right].
\end{align*}
\end{lem}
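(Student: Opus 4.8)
The plan is to mirror the proof of Lemma~\ref{lemma:scambio}: first use the fundamental switching Lemma~\ref{lemma:fondamentale} to pass from $\GG$ to $\FF$, and then exploit the Cox structure of $\tau_I,\tau_C$ to compute the resulting $\FF$-conditional expectation of the default payoff. Note that here $\phi$ is already $\FF$-predictable, so, unlike in Lemma~\ref{lemma:scambio}, no auxiliary $\FF$-measurable replacement $\widetilde\phi$ is needed.

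First I would apply Lemma~\ref{lemma:fondamentale} directly with $X=\II{\tau_I<\tau_C}\phi_\tau$ and $s=T$ (strictness at $T$ is immaterial since $\tau$ is diffuse). Because the event $\{t<\tau<T\}$ already carries the factor $\II{\tau>t}$, this gives
\[
\EE^\GG_t[\II{t<\tau<T}\II{\tau_I<\tau_C}\phi_\tau]=\II{\tau>t}\frac{\EE^\FF_t[\II{t<\tau<T}\II{\tau_I<\tau_C}\phi_\tau]}{\EE^\FF_t[\II{\tau>t}]}.
\]
The denominator is computed from conditional independence and the Cox property as $\EE^\FF_t[\II{\tau>t}]=\QQ[\tau>t\mid\FF_t]=e^{-\int_0^t(\lambda^I_s+\lambda^C_s)\,ds}=D(0,t,\lambda^I+\lambda^C)$.

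Next I would rewrite the numerator as an integral against the jump of the default indicator. On $\{\tau_I<\tau_C\}$ one has $\tau=\tau_I$, so
\[
\II{t<\tau<T}\II{\tau_I<\tau_C}\phi_\tau=\int_t^T\phi_u\,\II{\tau_I<\tau_C}\,d\II{\tau\le u}.
\]
I would then condition on the full default-free information $\FF_\infty$ (which contains the intensity paths and, since $\phi$ is $\FF$-predictable, makes $\phi_u$, $\lambda^I_u$, $\lambda^C_u$ known) and use the conditional independence of $\tau_I$ and $\tau_C$ to identify the conditional law of the event ``$I$ defaults first in $[u,u+du]$'' as having density $\lambda^I_u\,e^{-\int_0^u(\lambda^I_s+\lambda^C_s)\,ds}\,du$: indeed $\tau_I$ has conditional density $\lambda^I_u\,e^{-\int_0^u\lambda^I_s\,ds}$ and, independently, $\{\tau_C>u\}$ has conditional probability $e^{-\int_0^u\lambda^C_s\,ds}$, and the product gives the joint density of $\{\tau_I\in du,\ \tau_C>u\}$. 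After interchanging expectation and integral by Fubini and using the tower property to return from $\FF_\infty$ to $\FF_t$, the numerator becomes $\EE^\FF_t\big[\int_t^T\phi_u\,\lambda^I_u\,e^{-\int_0^u(\lambda^I_s+\lambda^C_s)\,ds}\,du\big]$.

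Finally I would divide numerator by denominator; the ratio $e^{-\int_0^u(\cdots)}/e^{-\int_0^t(\cdots)}$ collapses to $e^{-\int_t^u(\lambda^I_s+\lambda^C_s)\,ds}=D(t,u,\lambda^I+\lambda^C)$, yielding
\[
\II{\tau>t}\,\EE^\FF_t\Big[\int_t^T D(t,u,\lambda^I+\lambda^C)\,\lambda^I_u\,\phi_u\,du\Big],
\]
which is the claim. I expect the main obstacle to be the middle step: carefully identifying the $\FF$-conditional density of the event $\{\tau_I<\tau_C,\ \tau_I\in du\}$, i.e.\ computing the compensator (dual predictable projection) of the ``$I$ defaults first'' indicator under the Cox/conditional-independence hypotheses, and rigorously justifying the Fubini interchange together with the tower-property reduction from $\FF_\infty$ back to $\FF_t$.
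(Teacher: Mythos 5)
Your proof is correct, but note that the paper itself does not prove Lemma \ref{lem:localization}: it imports it verbatim as Lemma 3.8.1 of Bielecki, Jeanblanc and Rutkowski \cite{bjrcredit}, so there is no internal proof to compare against. What you have written is essentially the standard argument behind that cited result, and it follows the same template the paper uses for Lemma \ref{lemma:scambio}: apply the key switching Lemma \ref{lemma:fondamentale} to pull out $\II{\tau>t}$ together with the normalizing factor $\QQ[\tau>t \mid \FF_t]=D(0,t,\lambda^I+\lambda^C)$, then compute the remaining $\FF$-conditional expectation explicitly. The only step carrying content beyond Lemma \ref{lemma:scambio} is your middle one, namely identifying the $\FF_\infty$-conditional sub-density of the first-to-default event $\{\tau_I\in du,\ \tau_C>u\}$ as $\lambda^I_u e^{-\int_0^u(\lambda^I_s+\lambda^C_s)\,ds}\,du$; your derivation of it (conditional density of $\tau_I$ times conditional survival of $\tau_C$, multiplied by conditional independence under the Cox construction) is exactly right, and it is also precisely where $\FF$-predictability of $\phi$ is used, so that $\phi_u$ is $\FF_\infty$-measurable and can be treated as known under that conditioning. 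Two minor points you handle adequately but should keep explicit: replacing $\{t<\tau<T\}$ by $\{t<\tau\le T\}$ and discarding $\{\tau_I=\tau_C\}$ both rely on the conditional laws being atomless, which the intensity hypothesis guarantees; and, as everywhere in this paper, integrability of $\phi$ sufficient for the Fubini and tower-property interchanges is assumed implicitly rather than stated.
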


Now we postulate a particular form for the default cash flow, more precisely if we indicate $\widetilde{V}_t$ the $\FF$-adapted process such that
\[
\II{\tau>t}\widetilde{V}_t=\II{\tau>t}V_t
\]
then we define
\[
\theta_t=\epsilon_{t}-\II{\tau_C<\tau_I}LGD_C(\epsilon_{t}-C_t)^++\II{\tau_I<\tau_C}LGD_I(\epsilon_{t}-C_t)^-.
\]
Where $LGD$ indicates the loss given default, typically defined as $1-REC$, where $REC$ is the corresponding recovery rate and $(x)^+$ indicates the positive part of $x$ and $(x)^-=-(-x)^+$. The meaning of these flows is the following, consider $\theta_\tau$: 
\begin{itemize}
\item at first to default time $\tau$ we compute the close-out value $\epsilon_{\tau}$;
\item if the counterparty defaults and we are net debtor, i.e. $\epsilon_{\tau}-C_\tau\leq 0$ then we have to pay the whole close-out value $\varepsilon_\tau$ to the counterparty;
\item if the counterparty defaults and we are net creditor, i.e. $\epsilon_{\tau}-C_\tau>0$ then we are able to recover just a fraction of our credits, namely 
$C_\tau+REC_C(\varepsilon_\tau-C_\tau)=REC_C\varepsilon_\tau+LGD_CC_\tau=\varepsilon_\tau-LGD_C(\varepsilon_\tau-C_\tau)$ where $LGD_C$ indicates the loss given default and is equal to 
one minus the recovery rate $REC_C$.
\end{itemize}
A similar reasoning applies to the case when the Investor defaults.

If we now change filtration, we obtain the following expression for $V_t$ (where we omitted the tilde sign over the rates, see Remark~\ref{oss:Frates}):
\begin{equation}
\begin{aligned}
V_t=&\II{\tau>t}\EE^{\FF}_t\left[ D(t,T,r+\lambda)\Phi(S_T)+\int_t^{T}D(t,u,r+\lambda)(\pi_u+(f_u-c_u)C_u+(r_u-f_u)\widetilde{V}_u-(r_u-h_u)\widetilde{H}_u)du\right]\\
&+\II{\tau>t}\EE^{\FF}_t\left[\int_t^TD(t,u,r+\lambda)\widetilde{\theta}_u du\right],\\
\end{aligned}  
\end{equation}
where, if we suppose $\epsilon_t$ to be $\FF$-predictable, we have (using Lemma \ref{lem:localization}): 

\begin{equation}
\begin{aligned}
\widetilde{\theta}_u&=\epsilon_{u}\lambda_u-LGD_C(\epsilon_{u}-C_u)^+\lambda^C_u+LGD_I(\epsilon_{u}-C_u)^-\lambda^I_u.
\end{aligned}  
\end{equation}

\begin{oss}\label{oss:Frates}
From now on we will omit the tilde sign over the rates $f_u,h_u$. Moreover we note that if a rate is of the form
\[x_t=x^+\II{g(V_t,H_t)>0}+x^-\II{g(V_t,H_t)\leq 0}\]
then on the set $\{\tau>t\}$ it coincides with the rate
\[
\widetilde{x}_t=\widetilde{x}^+\II{g(\widetilde{V}_t,\widetilde{H}_t)>0}+\widetilde{x}^-\II{g(\widetilde{V}_t,\widetilde{H}_t)\leq 0}.
\]
\end{oss}

We note that this expression is of the form $V_t=\II{\tau>t}\Upsilon$ meaning that $V_t$ is zero on $\{\tau\leq t\}$ and that on the set $\{\tau>t\}$ it coincides with
 the $\FF$-measurable random variable $\Upsilon$. But we already know a variable that coincides with $V_t$ on $\{\tau>t\}$, i.e. $\widetilde{V}_t$. Hence we can write the following

\begin{equation}\label{eq:Vtilde}
\begin{aligned}
\widetilde{V}_t=&\EE^{\FF}_t\left[ D(t,T,r+\lambda)\Phi(S_T)+\int_t^{T}D(t,u,r+\lambda)(\pi_u (f_u-c_u)C_u+(r_u-f_u)\widetilde{V}_u-(r_u-h_u)\widetilde{H}_u)du\right]\\
&+\EE^{\FF}_t\left[\int_t^TD(t,u,r+\lambda)\widetilde{\theta}_u du\right].\\
\end{aligned}  
\end{equation}

We now show a way to obtain a BSDE from equation \eqref{eq:Vtilde}, another possible approach (without default risk) is shown for example in \cite{RutkowskiNie2014}.
We introduce the process

\begin{equation}
\begin{aligned}
  X_t=&\int_0^{t}D(0,u,r+\lambda)\pi_u du+\int_0^tD(0,u,r+\lambda)\widetilde{\theta}_udu\\
&+\int_0^{t}D(0,u,r+\lambda)\left[(f_u-c_u)C_u+(r_u-f_u)\widetilde{V}_u-(r_u-h_u)\widetilde{H}_u\right]du.
\end{aligned}
\end{equation}

Now we can construct a martingale summing up $X_t$ and the discounted value of the deal as in the following:
\[
D(0,t,r+\lambda)\widetilde{V}_t+X_t=\EE^{\FF}_t[X_T+D(0,T)\Phi(S_T)].
\]
So differentiating both sides we obtain:
\[
-(r_u+\lambda_u)D(0,u,r+\lambda)\widetilde{V}_udu+D(0,u,r+\lambda)d\widetilde{V}_u+dX_u=d\EE^{\FF}_u[X_T+D(0,T)\Phi(S_T)]
\]
If we substitute for $X_t$ we have: 
\[
d\widetilde{V}_u+\left[\pi_u-(r_u+\lambda_u)\widetilde{V}_u+\widetilde{\theta}_u+(f_u-c_u)C_u+(r_u-f_u)\widetilde{V}_u-(r_u-h_u)\widetilde{H}_u\right]du=\frac{d\EE^{\FF}_u[X_T+D(0,T)\Phi(S_T)]}{D(0,u,r+\lambda)}
\]
The process $(\EE^{\FF}_t[X_T+D(0,T)\Phi(S_T)])_{t\geq 0}$ is clearly a closed $\F$-martingale, and hence $\int_0^tD(0,u,r+\lambda)^{-1}d\EE^{\FF}_u[X_T+D(0,T)\Phi(S_T)]$ is a local $\F$-martingale. 
Then, being $\int_0^tD(0,u,r+\lambda)^{-1}d\EE^{\FF}_u[X_T+D(0,T)\Phi(S_T)]$ adapted to the Brownian driven filtration $\FF$, by the martingale representation theorem
 we have  $\int_0^tD(0,u,r+\lambda)^{-1}d\EE^{\FF}_u[X_T+D(0,T)\Phi(S_T)]=\int_0^tZ_udW_u$ for some $\FF$-predictable process $Z_u$. Hence we can write:
\begin{equation}\label{diffV}
d\widetilde{V}_u+\left[\pi_u-(f_u+\lambda_u)\widetilde{V}_u+\widetilde{\theta}_u+(f_u-c_u)C_u-(r_u-h_u)\widetilde{H}_u\right]du=Z_udW_u
\end{equation}

\section{Markovian FBSDE and PDE for $\widetilde{V}_t$}\label{sec:markovian}

As it is, equation \eqref{diffV} is way too general, thus we will make some simplifying assumptions in order to guarantee existence and uniqueness of a solution. First we assume a Markovian setting, and hence we suppose that all the processes appearing in \eqref{diffV} are deterministic functions of $S_u,\widetilde{V}_u$ or $Z_u$ and time.  
More precisely we assume that:
\begin{itemize}
\item the dividend process $\pi_u$ is a deterministic function $\pi(u,S_u)$ of $u$ and $S_u$, Lipschitz continuous in $S_u$;
\item the rates $r,f^\pm,c^\pm,\lambda^I,\lambda^C,h^\pm$ are deterministic bounded functions of time.
\item the collateral process is a fraction of the process $\widetilde{V}_{u}$, namely $C_u=\alpha_u\widetilde{V}_{u}$, where $0\leq\alpha_u\leq 1$ is a function of time;
\item the close-out value $\epsilon_t$ is equal to $\widetilde{V}_t$ (this adds a source of non-linearity with respect to choosing a risk-free closeout, see for example \cite{BrigoMoriniPallavicini2012} and \cite{BrigoPallaviciniPedersen});
\item the hedging process is of the form
  $\widetilde{H}_u=H(u,S_u,\widetilde{V}_u,Z_u)$, where $H(u,s,v,z)$ is a
  deterministic function Lipschitz-continuous in $v,z$ uniformly in $u$;
\item the diffusion coefficient $\sigma(t,S_t)$ of the underlying dynamic is Lipschitz continuous uniformly in time in $S_t$
\end{itemize}
\begin{oss}
Note that under certain assumptions on the coefficients of the dynamics of $\widetilde{V}_t$, in Section \ref{sec:fbsdeexun} 
we will actually show that $\widetilde{V}_t$ is a continuous process and hence a predictable one, 
so that the assumptions on the collateral and close-out value processes are reasonable.
\end{oss}

\begin{oss}
The reason why we can postulate such a specific form for the hedging process is that since the default intensities are deterministic, 
the only risk left to be hedged once we switched to filtration $\FF$ is the market risk.
\end{oss}

Under our assumptions, equation \eqref{diffV} becomes the following FBSDE (that has been rewritten to emphasize the dependence on the initial data,
 and without the tildes to ease the notation):

\begin{equation}\label{flowFBSDE}
  \begin{aligned}
&dS^{q,s}_t=r_tS^{q,s}_tdt+\sigma(t,S^{q,s}_t)dW_t \quad q<t\leq T \\\\
&S_q=s_q \quad 0\leq t \leq q \\\\
&dV^{q,s}_t=-\underbrace{\left[\pi_t+\theta_t(f_t(\alpha_t-1)-\lambda_t-c_t\alpha_t)V^{q,s}_t-(r_t-h_t)H(t,S^{q,s}_t,V^{q,s}_t,Z^{q,s}_t)
\right]}_{B(t,S^{q,s}_t,V^{q,s}_t,Z^{q,s}_t)}dt+Z^{q,s}_tdW_t\\
&V^{q,s}_T=\Phi(S_T^{q,s}).
  \end{aligned}
\end{equation}


Now we wish to explain intuitively how we can obtain a Black-Scholes like PDE from our FBSDE. 
Let us assume that the value of our contract is a deterministic, $C^{1,2}$ function of time and of the underlying, i.e.
$V_t^{q,s}=u(t,S^{q,s}_t)$. Then we can write the Ito's formula for $u(t,S^{q,s}_t)$, obtaining:
\begin{equation}\label{Ito}
du(t,S^{q,s}_t)=\left(\partial_tu(t,S^{q,s}_t)+r_tS^{q,s}_t\partial_su(t,S^{q,s}_t)+\frac{1}{2}\sigma(t,S^{q,s}_t)^2\partial^2_su(t,S^{q,s}_t)\right)dt+\sigma(t,S^{q,s}_t)\partial_su(t,S^{q,s}_t)dW_t.
\end{equation}
Then by comparing expressions \eqref{Ito} and \eqref{flowFBSDE} we have the following

\begin{equation}\label{comparison}
  \begin{aligned}
&\partial_tu(t,S^{q,s}_t)+ r_tS^{q,s}_t\partial_su(t,S^{q,s}_t)+\frac{1}{2}\sigma(t,S^{q,s}_t)^2\partial^2_su(t,S^{q,s}_t)=-B(t,S^{q,s}_t,\widetilde{V}_t,Z_t^{q,s})\\
&\sigma(t,S^{q,s}_t)\partial_su(t,S^{q,s}_t)=Z_t^{q,s} .
  \end{aligned}
\end{equation}
So, $V_t^{t,s}=u(t,s)$ satisfies the following semilinear PDE:

\begin{equation}\label{PDEV}
  \begin{aligned}
&\partial_tu(t,s)+\frac{1}{2}\sigma(t,s)^2\partial^2_su(t,s)+\mu(t,s)\partial_su(t,s)+B(t,s,u(t,s),(\partial_s u \sigma)(t,s))=0\\
&u(T,s)=\Phi(s)
  \end{aligned}
\end{equation}
Moreover we see from \eqref{comparison} that the process $Z_t^{q,s}$ is in a certain sense, a multiple of the delta-hedging process.

\section{FBSDE Existence and Uniqueness Results}\label{sec:fbsdeexun}
We now state the precise conditions under which we can obtain existence and uniqueness of the solution to both the FBSDE and the PDE of the previous section.
More specifically as done in  Pardoux and Peng \cite{PardouxPeng1992} we have (for a generalization to the case of fully coupled FBSDE see for example \cite{Delarue2002}) :

\begin{teo}\label{Papeng}
Consider the following FBSDE on the interval $[0,T]$

\begin{equation}\label{genericFBSDEdecoupled}
  \begin{aligned}
&dX^{q,x}_t=\mu(t,X^{q,x}_t)dt+\sigma(t,X^{q,x}_t)dW_t \quad q<t\leq T \\
&X_t=x \quad 0\leq t\leq q\\
&dY^{q,x}_t=-f(t,X^{q,x}_t,Y^{q,x}_t,Z^{q,x}_t)dt+Z^{q,x}_tdW_t\\
&Y^{q,x}_T=g(X^{q,x}_T)
  \end{aligned}
\end{equation}

Assume that there exist a constant $K$ such that $\forall t$
\begin{itemize}
  \item  $|\mu(t,x)-\mu(t,x')|+|\sigma(t,x)-\sigma(t,x')|\leq K|x-x'|$
  \item  $|\mu(t,x)|+|\sigma(t,x)|\leq K(1+|x|)$
  \item  $|f(t,x,y,z)-f(t,x,y',z')|\leq K(|y-y'|+|z-z'|)$
\end{itemize}
Moreover suppose that there exist a constant $p\geq 1/2$ such that:
\[|g(x)|+|f(t,x,0,0)|\leq K(1+|x|^p)\]
and that the map
\[x\mapsto(f(t,x,0,0),g(x))\]
is continuous,then there exist two measurable deterministic functions $u(t,x),\ d(t,x)$ such that 
the unique solution $(X^{q,x}_t,Y^{q,x}_t,Z^{q,x}_t)$ of \eqref{genericFBSDEdecoupled} is given by
\[
Y^{q,x}_t=u(t,X_t^{q,x}) \qquad Z^{q,x}_t=d(t,X_t^{q,x})\sigma(t,X_t^{q,x})
\]
and moreover  $u(t,x)=Y^{t,x}_t$ is the unique viscosity solution to following PDE 

\begin{equation}\label{PDE}
  \begin{aligned}
&\partial_tu(t,x)+\frac{1}{2}\sigma(t,x)^2\partial^2_xu(t,x)+\mu(t,x)\partial_xu(t,x)+f(t,x,u(t,x),\sigma(t,x)\partial_xu(t,x))=0\\
&u(T,x)=g(x)
  \end{aligned}
\end{equation}

\end{teo}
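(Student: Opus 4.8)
The plan is to prove the statement in four stages: well-posedness of the forward SDE, well-posedness of the backward equation given the forward process, the Markovian representation $Y_t^{q,x}=u(t,X_t^{q,x})$ and $Z_t^{q,x}=d(t,X_t^{q,x})\sigma(t,X_t^{q,x})$, and finally the viscosity characterization together with its uniqueness. Throughout I would work in the Hilbert spaces $\mathcal{S}^2$ of continuous adapted processes with finite $\EE[\sup_{t\le T}|\cdot|^2]$ and $\mathcal{H}^2$ of predictable processes with finite $\EE[\int_0^T|\cdot|^2\,dt]$, switching to the weighted norm $\EE[\int_0^T e^{\beta t}|\cdot|^2\,dt]$ when a contraction constant needs to be tuned.

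First I would handle the forward SDE: under the Lipschitz and linear-growth hypotheses on $\mu,\sigma$, the classical It\^o fixed-point argument yields a unique strong solution $X^{q,x}$ together with the moment estimates $\EE[\sup_{t\le T}|X_t^{q,x}|^m]\le C(1+|x|^m)$ for every $m\ge1$. Given this process, the terminal datum $g(X_T^{q,x})$ is square-integrable thanks to the growth bound $|g(x)|\le K(1+|x|^p)$ and the forward moment bounds, while $(y,z)\mapsto f(t,X_t,y,z)$ is uniformly Lipschitz with $f(t,X_t,0,0)$ in $\mathcal{H}^2$. Then I would run the standard BSDE argument: the map sending an input $(y,z)$ to the solution $(Y,Z)$ of the BSDE with the frozen driver $f(t,X_t,y_t,z_t)$ — obtained by applying the martingale representation theorem to the closed martingale $\EE_t[g(X_T)+\int_0^T f(s,X_s,y_s,z_s)\,ds]$ and subtracting the finite-variation part — is a contraction in the $\beta$-weighted norm for $\beta$ large, so its unique fixed point is the sought solution $(Y^{q,x},Z^{q,x})\in\mathcal{S}^2\times\mathcal{H}^2$.

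Next I would establish the representation by deterministic functions. The forward flow enjoys the flow property $X_t^{q,x}=X_t^{s,X_s^{q,x}}$ for $q\le s\le t$, and by the $L^2$ stability estimate the triple $(X,Y,Z)$ depends continuously on the initial data. In particular $Y_q^{q,x}$ is measurable with respect to the germ $\sigma$-field at time $q$, hence almost surely constant by the Blumenthal zero--one law, so $u(q,x):=Y_q^{q,x}$ defines a deterministic function. Uniqueness of the BSDE solution, combined with the flow property, then forces $Y_t^{q,x}=u(t,X_t^{q,x})$ and, by identifying the stochastic integrands, $Z_t^{q,x}=d(t,X_t^{q,x})\sigma(t,X_t^{q,x})$ for measurable $u,d$; the same stability estimate upgrades $u$ to a continuous function.

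Finally I would prove that $u$ is the unique viscosity solution of the PDE. For the supersolution property, fix $\varphi\in C^{1,2}$ touching $u$ from above at $(t_0,x_0)$; applying It\^o to $\varphi(s,X_s^{t_0,x_0})$ and comparing it with the backward dynamics of $Y_s=u(s,X_s)$ on a shrinking interval $[t_0,t_0+h]$, then invoking the comparison theorem for BSDEs and arguing by contradiction, I would obtain $\partial_t\varphi+\tfrac12\sigma^2\partial_x^2\varphi+\mu\partial_x\varphi+f(t_0,x_0,u,\sigma\partial_x\varphi)\ge0$ after dividing by $h$ and letting $h\downarrow0$; the reverse inequality for test functions touching from below is symmetric. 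Uniqueness then follows from the standard comparison principle for viscosity solutions of this degenerate parabolic equation, valid because $f$ is Lipschitz in $(y,z)$, restricted to the class of solutions with polynomial growth. I expect the viscosity step to be the main obstacle: it requires assembling the continuous-dependence estimates, the Markov representation, the BSDE comparison theorem and a careful localization of the test function, and the uniqueness hinges on the comparison principle together with a growth restriction that rules out pathologies at infinity.
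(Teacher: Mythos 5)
The paper itself offers no proof of this theorem: it is quoted verbatim from Pardoux and Peng (1992), with Zhang's thesis cited later for the classical-solution analogue, so there is no internal argument to compare against. Your four-stage sketch reconstructs the standard proof from that cited literature — Picard iteration for the forward SDE, contraction via martingale representation for the BSDE, flow property plus a zero--one law for the deterministic representation, and the test-function/It\^o/BSDE-comparison argument for the viscosity property — and this is indeed the right route; in outline it is sound.

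One step, however, is genuinely under-argued: the claim that $Z^{q,x}_t=d(t,X^{q,x}_t)\sigma(t,X^{q,x}_t)$ follows ``by identifying the stochastic integrands.'' Martingale representation only produces \emph{some} predictable $Z\in\mathcal{H}^2$; nothing in that identification forces $Z_t$ to be a deterministic function of $(t,X_t)$, and your zero--one law argument pins down $Y^{q,x}_q$ at the single time $q$ but says nothing about $Z$ on $(q,T]$. The standard fix is to run the Picard iteration for the BSDE and show inductively that each iterate $(Y^n,Z^n)$ has the Markovian form --- either by approximating the coefficients by smooth ones, solving the associated PDE and applying It\^o's formula so that $Z^n_t=\partial_x u_n(t,X_t)\sigma(t,X_t)$, or by invoking a representation theorem for functionals of Markov diffusions --- and then pass the property through the $L^2$-limit; this is how Pardoux--Peng and El Karoui--Peng--Quenez actually obtain $d$. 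Two smaller points: your labels are swapped in the viscosity step (a test function $\varphi$ touching $u$ from above, i.e.\ a local maximum of $u-\varphi$, yields the \emph{sub}solution inequality $\geq 0$); and the comparison principle you invoke for uniqueness is not off-the-shelf here, since $f$ is only continuous in $x$, depends on $\sigma\partial_x u$, and $\sigma$ may be degenerate --- this is precisely the part that requires the Crandall--Ishii--Lions machinery restricted to the polynomial-growth class, as you correctly anticipate.
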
 

In order to have a classical solution to equation \eqref{PDE} we need to assume some smoothness of the coefficients of equation \eqref{genericFBSDEdecoupled}.
A possible choice is the following (see J.Zhang \cite{ZhangPhdthesis} Theorem 2.4.1 on page 41):
\begin{teo}\label{teo:classicsolution}
Consider Equation \eqref{genericFBSDEdecoupled}. If we assume that there exists a positive constant $K$ such that
\begin{itemize}
\item $\sigma(t,x)^2\geq \frac{1}{K}$;
\item  $|f(t,x,y,z)-f(t,x',y',z')|+|g(x)-g(x')|\leq K(|x-x'|+|y-y'|+|z-z'|)$;
\item $|f(t,0,0,0)|+|g(0)|\leq K$;
\end{itemize}
and moreover the functions $\mu(t,x)$ and $\sigma(t,x)$ are $C^2$ with bounded derivatives, then
equation \eqref{genericFBSDEdecoupled} has a unique solution $(X^{q,x}_t,Y^{q,x}_t,Z^{q,x}_t)$ 
and  $u(t,x)=Y^{t,x}_t$ is the unique \emph{classical} (i.e. $C^{1,2}$) solution to the semilinear PDE \eqref{PDE}.
\end{teo}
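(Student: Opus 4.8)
The plan is to proceed in three stages: first secure existence and uniqueness of the FBSDE solution together with the viscosity-solution representation, then upgrade the regularity of the candidate function $u$ to $C^{1,2}$, and finally identify $u$ with the unique classical solution of \eqref{PDE} by a verification argument.

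First I would observe that, because the system \eqref{genericFBSDEdecoupled} is decoupled (the forward dynamics of $X^{q,x}$ do not involve $Y$ or $Z$), the forward SDE can be solved in isolation: the Lipschitz and linear-growth requirements on $\mu,\sigma$ — both implied here by the $C^2$-with-bounded-derivatives hypothesis — give a unique strong solution with moments of all orders and a well-behaved stochastic flow $x\mapsto X^{q,x}_t$. Given this, the backward component is a standard BSDE whose driver is Lipschitz in $(y,z)$ and has at most linear growth in $x$, so Theorem \ref{Papeng} applies verbatim and yields the unique triple $(X,Y,Z)$, the deterministic representation $Y^{q,x}_t=u(t,X^{q,x}_t)$ and $Z^{q,x}_t=d(t,X^{q,x}_t)\sigma(t,X^{q,x}_t)$, and the fact that $u$ is the unique viscosity solution of \eqref{PDE}. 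The only thing left, and the heart of the theorem, is to show that this viscosity solution is in fact classical.

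For the regularity upgrade I would exploit the two assumptions that distinguish this statement from Theorem \ref{Papeng}, namely the uniform ellipticity $\sigma(t,x)^2\ge 1/K$ and the $C^2$-smoothness of $\mu,\sigma$. There are two natural routes. The PDE route treats \eqref{PDE} as a uniformly parabolic semilinear equation whose nonlinearity $f(t,x,u,\sigma\partial_x u)$ carries a genuine dependence on the gradient, and invokes the interior Schauder theory for such equations: the Lipschitz character of $f$ and $g$ first yields that $u$ is globally Lipschitz and hence locally Hölder, after which the non-degeneracy of the diffusion produces the parabolic smoothing that bootstraps $u$ to $C^{1,2}$ on $(0,T)\times\RR$, with the terminal datum attained continuously. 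The probabilistic route instead differentiates the flow: under the $C^2$ hypothesis the first variation process $\nabla_x X^{q,x}$ solves a linear SDE, which lets one differentiate the BSDE, represent $\partial_x u$ and thereby confirm $Z=\sigma\,\partial_x u$, with a further differentiation supplying the second spatial derivative. Either way the conclusion is the same interior $C^{1,2}$ regularity.

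Finally, with $u\in C^{1,2}$ in hand, I would close the argument by verification: applying It\^o's formula to $u(t,X^{q,x}_t)$ and matching the resulting drift and diffusion terms against the backward dynamics in \eqref{genericFBSDEdecoupled} forces $u$ to satisfy \eqref{PDE} classically and gives $Z^{q,x}_t=\sigma(t,X^{q,x}_t)\partial_x u(t,X^{q,x}_t)$. Uniqueness among classical solutions then follows for free: any classical solution $v$ of \eqref{PDE} produces, again via It\^o, a solution of the BSDE, so $v(t,X^{q,x}_t)=Y^{q,x}_t=u(t,X^{q,x}_t)$, and evaluating at $t=q$ with $x$ arbitrary (using $X^{q,x}_q=x$) gives $v\equiv u$. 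I expect the genuine difficulty to be concentrated entirely in the regularity step: handling the gradient-dependence of the nonlinear term and, crucially, the fact that $g$ is only Lipschitz rather than smooth, so that $C^{1,2}$ regularity is available only in the interior and must be extracted from the smoothing effect of the non-degenerate diffusion rather than inherited from the terminal data.
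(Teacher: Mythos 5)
You should first be aware that the paper does not prove this statement at all: Theorem \ref{teo:classicsolution} is imported verbatim from Zhang's PhD thesis (Theorem 2.4.1 there), and the paper's only ``proof'' is that citation. So there is no internal argument to compare yours against, and your proposal must stand on its own as a reconstruction. Its architecture is the standard one for results of this type and is sound in outline: the decoupled forward SDE is solved first, Theorem \ref{Papeng} gives the unique triple and the viscosity characterization, uniform ellipticity plus smooth $\mu,\sigma$ drive a regularity upgrade, and an It\^o verification closes the loop. Your remark that $C^{1,2}$ regularity can only be interior --- on $[0,T)\times\RR$, not up to $T$ --- because $g$ is merely Lipschitz, is exactly the right caveat.

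There are, however, two genuine gaps. First, the regularity step (which you correctly identify as the heart of the theorem) is asserted rather than executed, and as phrased the PDE route does not quite run: Schauder estimates need a H\"older-continuous source term, but here the source $f(t,x,u,\sigma\partial_x u)$ contains $\partial_x u$, so Schauder theory cannot be invoked until one already knows $\partial_x u$ is locally H\"older --- which is most of what is to be proved. The standard repairs are either (i) to first establish that $u$ is Lipschitz in $x$ (from BSDE stability estimates), regard \eqref{PDE} as a \emph{linear} uniformly parabolic equation with bounded measurable source, apply interior parabolic $W^{2,p}$ (Calder\'on--Zygmund/Krylov) estimates and Sobolev embedding to obtain $\partial_x u\in C^{\alpha}_{loc}$, and only then bootstrap with Schauder; or (ii) mollify $g$ and the coefficients, solve the approximating problems classically, prove interior estimates uniform in the approximation, and pass to the limit, identifying the limit with $u$ via BSDE stability --- this second route is essentially how the cited literature proceeds. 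Either repair also needs the equivalence between the viscosity solution and the $L^p$-strong solution to get started. Second, ``uniqueness among classical solutions then follows for free'' is not free: to run It\^o on an arbitrary classical solution $v$ of \eqref{PDE} you must know that the stochastic integral $\int\sigma(t,X_t)\partial_x v(t,X_t)\,dW_t$ is a true martingale, i.e.\ that $(v(t,X_t),\sigma\partial_x v(t,X_t))$ lies in the BSDE solution space; without a growth restriction this fails (Tychonov-type nonuniqueness for the heat equation already shows classical uniqueness is false in the unrestricted class). The uniqueness assertion must be read, as in Zhang's statement, within the class of classical solutions of polynomial growth, and your verification additionally requires a localization and a limiting argument $t\uparrow T$, since $u$ is not $C^{1,2}$ at the terminal time.
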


Our aim is applying Theorem \ref{Papeng} to our FBSDE. Indeed, one can check that the assumptions of Theorem \ref{Papeng} are satisfied in our setting, so as to obtain the following theorem on existence and uniqueness of a classical solution of the semilinear PDE for Credit-Collateral-Funding-closeout inclusive valuation.

\begin{teo}\label{Lipschitz_viscosity}(Existence \& uniqueness of viscosity solution of semilinear PDE for comprehensive valuation).    
If the rates $\lambda_t,\ f_t,\ c_t,\ h_t,\ r_t$ are bounded, then $|B(t,s,v,z)-B(t,s',v',z')|\leq K(|v-v'|+|z-z'|)$.
Hence if there exists a $p\geq 1/2$ such that $|B(t,s,0,0)|+\Phi(s)\leq K(1+|s|^p)$ the assumptions of Theorem \ref{Papeng} are satisfied and so
 equation \eqref{flowFBSDE} has a unique solution, and moreover
 $u(t,s)=V_t^{t,s}$ is a viscosity solution to the following semilinear PDE:

\begin{equation}\label{PDEdependent}
\begin{aligned}
&\partial_tu(t,s)+\frac{1}{2}\sigma(t,s)^2\partial^2_su(t,s)+r_ts\partial_su(t,s)+B(t,s,u(t,s),\sigma(t,s)\partial_su(t,s))=0\\
&u(T,s)=\Phi(s)
\end{aligned}
\end{equation} 

\end{teo}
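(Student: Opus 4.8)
The plan is to show that the FBSDE \eqref{flowFBSDE} falls squarely within the scope of Theorem \ref{Papeng} by checking its hypotheses coefficient by coefficient, under the identifications $\mu(t,s)=r_t s$, diffusion $\sigma(t,s)$, driver $f(t,s,v,z)=B(t,s,v,z)$ and terminal datum $g=\Phi$. For the forward component this is immediate: since $r_t$ is bounded, the drift $r_t s$ is globally Lipschitz in $s$ and of linear growth, while the standing assumption that $\sigma(t,\cdot)$ is Lipschitz uniformly in time supplies both the Lipschitz bound and, together with boundedness of $\sigma(t,0)$, the linear-growth bound. Thus the first two bullet conditions of Theorem \ref{Papeng} hold.

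The core of the argument is the Lipschitz estimate on $B$ in $(v,z)$, which is the first assertion of the theorem. Here I would substitute the Markovian specifications $C_u=\alpha_u v$ and $\epsilon_u=v$ into the driver read off from \eqref{diffV}, namely $-B=\pi_u-(f_u+\lambda_u)v+\widetilde\theta_u+(f_u-c_u)\alpha_u v-(r_u-h_u)H(u,s,v,z)$, and expand $\widetilde\theta_u=v\lambda_u-LGD_C((1-\alpha_u)v)^+\lambda_u^C+LGD_I((1-\alpha_u)v)^-\lambda_u^I$. The subtlety that must be addressed, and the step I expect to require the most care, is that the rates $f_u,c_u,h_u$ of \eqref{fundingrate}, \eqref{collateralrate} and \eqref{hedgingrate} are themselves \emph{discontinuous} functions of $v$ (and, through $H$, of $(v,z)$): $f_u$ and $c_u$ switch on the sign of $v$, while $h_u$ switches on the sign of $H$. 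A priori $B$ is therefore not even continuous, let alone Lipschitz. The resolution is that each sign-dependent rate always multiplies the very quantity whose sign it tracks, so the offending products collapse into globally Lipschitz expressions via the positive/negative-part decomposition: for instance $f_u v=f_u^+ v^+-f_u^- v^-$, the collateral product $(f_u-c_u)\alpha_u v$ is of the same type, and $h_u H=h_u^+ H^+-h_u^- H^-$; moreover $0\le\alpha_u\le 1$ lets me factor the nonnegative quantity $(1-\alpha_u)$ out of the positive and negative parts appearing in $\widetilde\theta_u$.

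Once rewritten this way, $B$ is a finite combination of bounded-coefficient linear terms in $v$, of the maps $v\mapsto v^\pm$ and $H\mapsto H^\pm$ (each Lipschitz with constant one), and of the hedging function $H$, which is Lipschitz in $(v,z)$ uniformly in $u$ by assumption. Since all rates are bounded, taking the maximum of the finitely many resulting constants yields a single $K$ with $|B(t,s,v,z)-B(t,s,v',z')|\le K(|v-v'|+|z-z'|)$, uniformly in $(t,s)$, which is exactly the third bullet of Theorem \ref{Papeng}.

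Finally I would dispatch the remaining hypotheses. The growth condition $|B(t,s,0,0)|+\Phi(s)\le K(1+|s|^p)$ for some $p\ge 1/2$ is assumed outright in the statement, so nothing is left to prove there; continuity of $s\mapsto(B(t,s,0,0),\Phi(s))$ follows from the Lipschitz continuity of $\pi(t,\cdot)$ and of $\Phi$, together with the mild requirement that $H(u,\cdot,0,0)$ be continuous in $s$, which I would either assume explicitly or note as needed. With every hypothesis verified, Theorem \ref{Papeng} delivers both existence and uniqueness of the triple $(S^{q,s},V^{q,s},Z^{q,s})$ solving \eqref{flowFBSDE} and the representation $u(t,s)=V_t^{t,s}$ as the unique viscosity solution of the PDE \eqref{PDE} with the present data; substituting $\mu(t,s)=r_t s$ then identifies that PDE with \eqref{PDEdependent}, completing the proof.
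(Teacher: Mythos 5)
Your proposal is correct and takes essentially the same route as the paper's own proof: decompose $B$ into its summands and observe that each sign-switching rate multiplies exactly the quantity whose sign it tracks, so every term is continuous and piecewise linear --- hence Lipschitz --- in $(v,z)$, with the hedging term handled by composing with the Lipschitz function $H$. You are more explicit than the paper (which compresses the key step into ``continuous and piece-wise linear, hence Lipschitz'') and you correctly flag the mild extra requirement that $H(u,\cdot,0,0)$ be continuous in $s$ for the continuity hypothesis of Theorem \ref{Papeng}, but the substance of the argument is identical.
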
 

 \begin{proof}
We start by rewriting the term 
\[
B(t,\omega,v,z)=\pi_t(s)+\theta_t(v)+(f_t(\alpha_t-1)-\lambda_t-c_t\alpha_t)v-(r_t-h_t)H(u,s,v,z).
\]
Since the sum of two Lipschitz functions is itself a Lipschitz function we can restrict ourselves to analyzing the summands that appear in the previous
formula.
The term $\pi_t$ is Lipschitz continuous in $s$ by assumption. The $\theta$ term and the $(f_t(\alpha_t-1)-\lambda_t-c_t\alpha_t)v$ term are continuous and
 piece-wise linear, hence Lipschitz continuous. The last term is piece-wise linear as a function of $H$ which is a Lipschitz function of $v,z$.
\end{proof}

\section{Invariance Theorem}\label{sec:invariance}
We now want to specialize equation \eqref{flowFBSDE} to the case in which we use
delta-hedging. In particular following the heuristic reasoning in Section
\ref{sec:markovian} we choose
\[
\widetilde{H}_t=S_t\frac{Z_t}{\sigma(t,S_t)}.
\]

Now we prove that with this choice equation \eqref{flowFBSDE} has a
solution $V_t^{q,s}$ such that $V_t^{q,s}=u(t,S_t^{q,s})$ with $u\in C^{1,2}$. We cannot directly apply Theorem \ref{teo:classicsolution} to our FBSDE because $B(t,s,v,z)$ is not Lipschitz continuous in $s$ because of the hedging term. But, since the delta-hedging term is linear in $Z_t$ we can move it from the drift of the backward equation to the drift of the forward one. More precisely consider the following: 

\begin{equation}\label{FBSDE_h}
  \begin{aligned}
&dS^{q,s}_t=h_tS^{q,s}_tdt+\sigma(t,S^{q,s}_t)dW_t \quad q<t\leq T \\\\
&S_q=s_q \quad 0\leq t \leq q \\\\
&dV^{q,s}_t=-\underbrace{\left[\pi_t+\theta_t-\lambda_tV^{q,s}_t+f_tV^{q,s}_t(\alpha_t-1)-c_t(\alpha_tV^{q,s}_t)\right]}_{B'(t,S^{q,s}_t,V^{q,s}_t)}dt+Z^{q,s}_tdW_t\\
&V^{q,s}_T=\Phi(S_T^{q,s}).
  \end{aligned}
\end{equation}

Note that the $S$-dynamics in \eqref{FBSDE_h} has the repo rate $h$ as drift .
Since in general $h$ will depend on the future values of the deal, thiscould be a source of nonlinearity and is at times represented informally with an expected value $\mathbb{E}^h$ or a pricing measure $\mathbb{Q}^h$, see for example \cite{BrigoPallaviciniPedersen} and the related discussion on operational implications for the case $h=f$. 
Indeed, one can check that the assumptions of Theorem \ref{teo:classicsolution} are satisfied for this equation:

\begin{teo}\label{Lipschitz}    
If the rates $\lambda_t,\ f_t,\ c_t,\ h_t,\ r_t$ are bounded, then $|B'(t,s,v)-B'(t,s',v')|\leq K(|s-s'|+|v-v'|)$ and $|B'(t,0,0)|+\Phi(0)\leq K$.
Hence if $\sigma(t,s)$ is a positive $C^2$ function with bounded derivatives and
the rate $h_t$ does not depend on the sign of $H$, namely $h^+=h^-$, then the assumptions of Theorem \ref{teo:classicsolution} are satisfied and so
 equation \eqref{FBSDE_h} has a unique solution, and moreover $V_t^{t,s}=u(t,s)\in C^{1,2}$ and satisfies the following semilinear PDE:

\begin{equation}\label{PDEindependent}
\begin{aligned}
&\partial_tu(t,s)+\frac{1}{2}\sigma(t,s)^2\partial^2_su(t,s)+h_ts\partial_su(t,s)+B'(t,s,u(t,s))=0\\
&u(T,s)=\Phi(s)
\end{aligned}
\end{equation} 

\end{teo}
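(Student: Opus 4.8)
The plan is to verify, one hypothesis at a time, that the reformulated forward--backward system \eqref{FBSDE_h} meets every requirement of Theorem \ref{teo:classicsolution}; the two displayed inequalities in the statement are the only substantive estimates, while the remaining conditions are read off directly from the standing assumptions on $\sigma$ and $h$. The essential gain over \eqref{flowFBSDE} is that the driver $B'$ no longer contains the delta-hedging term $-(r_t-h_t)\widetilde H$: this term is the unique obstruction to Lipschitz continuity in $s$, because with $\widetilde H_t=S_tZ_t/\sigma(t,S_t)$ it behaves like $sz$ and its $s$-derivative involves $z$, hence it is not uniformly Lipschitz. Having been absorbed into the drift of the forward equation (turning the transport coefficient from $r_t$ into $h_t$), it leaves behind the purely algebraic driver $B'(t,s,v)=\pi(t,s)+\widetilde\theta_t(v)-\lambda_t v+f_t v(\alpha_t-1)-c_t\alpha_t v$, which I would analyze next.

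For the Lipschitz estimate I would argue summand by summand, using that a finite sum of Lipschitz functions is Lipschitz with constant the sum of the constants, exactly as in the proof of Theorem \ref{Lipschitz_viscosity}. The term $\pi(t,s)$ is Lipschitz in $s$ by assumption and independent of $v$; the close-out contribution $\widetilde\theta_t(v)$, obtained by setting $\epsilon_t=v$ and $C_t=\alpha_t v$, is continuous and piecewise linear in $v$ with slopes built from the bounded quantities $\lambda_t,\lambda^I_t,\lambda^C_t$ and the loss-given-default factors, hence Lipschitz in $v$ and independent of $s$; the remaining terms are linear in $v$ with coefficient $-\lambda_t+f_t(\alpha_t-1)-c_t\alpha_t$, bounded because the rates are bounded and $0\le\alpha_t\le1$. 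Collecting these gives $|B'(t,s,v)-B'(t,s',v')|\le K(|s-s'|+|v-v'|)$ with $K$ uniform in $t$. At the origin every $v$-term vanishes, since $\widetilde\theta_t(0)=0$ (because $(0)^+=(0)^-=0$), so $B'(t,0,0)=\pi(t,0)$, which together with the finite value $\Phi(0)$ of the Lipschitz payoff yields $|B'(t,0,0)|+\Phi(0)\le K$ (the time-boundedness of $t\mapsto\pi(t,0)$ being understood).

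It then remains to match the structural hypotheses of Theorem \ref{teo:classicsolution} for the forward data of \eqref{FBSDE_h}. The diffusion $\sigma(t,s)$ is assumed $C^2$ with bounded derivatives, and its positivity is used in the strong, uniformly elliptic sense $\sigma(t,s)^2\ge 1/K$ that the theorem requires. The drift is now $\mu(t,s)=h_t s$, which is linear in $s$, hence $C^2$ with $s$-derivatives $h_t$ and $0$, both bounded. Here the hypothesis $h^+=h^-$ is indispensable: it forces $h_t$ to be a deterministic function of time alone rather than of $\mathrm{sign}(H_t)$, so that $\mu$ does not inherit a state-dependent, discontinuous coefficient that would destroy the required $C^2$ regularity. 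With the Lipschitz and growth bounds of the previous step supplying the conditions on the driver and terminal datum (the trivial $z$-dependence of $B'$ being Lipschitz with constant $0$), all hypotheses of Theorem \ref{teo:classicsolution} hold, whence \eqref{FBSDE_h} has a unique solution, $u(t,s)=V^{t,s}_t$ is $C^{1,2}$, and $u$ solves \eqref{PDEindependent}.

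I would flag as the main subtlety not these estimates, which are routine once $B'$ is in hand, but the legitimacy of the passage from \eqref{flowFBSDE} to \eqref{FBSDE_h}: one must be sure that solving the delta-hedged backward equation against the $h$-drifted forward dynamics returns the same value process $V$. At the PDE level this is transparent, since substituting $Z=\sigma\partial_s u$ turns $-(r_t-h_t)\widetilde H$ into $-(r_t-h_t)s\,\partial_s u$, which combines with the transport term $r_t s\,\partial_s u$ to give exactly $h_t s\,\partial_s u$; but at the FBSDE level it is precisely this reshuffling, admissible because the hedging term is linear in $Z$, that simultaneously removes the $s$-non-Lipschitz term and renders the classical-solution theorem applicable.
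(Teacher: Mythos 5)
Your verification is correct and takes essentially the same approach as the paper: the paper states this theorem with only the remark that ``one can check'' the hypotheses of Theorem \ref{teo:classicsolution}, and the check it has in mind is precisely your summand-by-summand Lipschitz argument (mirroring the proof of Theorem \ref{Lipschitz_viscosity}) together with the observation that, once $h^+=h^-$ makes $h_t$ a deterministic function of time, the forward coefficients $\mu(t,s)=h_ts$ and $\sigma(t,s)$ satisfy the required regularity and ellipticity conditions. Your two flags --- that positivity of $\sigma$ must be read as the uniform bound $\sigma(t,s)^2\geq 1/K$, and that the equivalence of \eqref{FBSDE_h} with \eqref{flowFBSDE} is a separate matter (handled in the paper by Theorem \ref{teo:existence}) --- are accurate glosses on the paper's implicit steps rather than deviations from them.
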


We now show that a solution to equation \eqref{flowFBSDE} can be obtained by
means of the classical solution to the PDE \eqref{PDEdependent}. We start considering the following forward equation which is known to have a unique solution under our assumptions about $\sigma(t,s)$. 
\begin{equation}\label{forward}
dS_t=r_tS_tdt+\sigma(t,S_t)dW_t \quad S_0=s.
\end{equation}
We define $V_t=u(t,S_t)$ and $Z_t=\sigma(t,S_t)\partial_su(t,S_t)$. By Theorem \ref{Lipschitz} we know that $u(t,s)\in C^{1,2}$ and by applying Ito's formula and \eqref{PDEindependent} we obtain:
\begin{equation*}
\begin{aligned}
dV_t=&du(t,S_t)=\left(\partial_tu(t,S_t)+r_tS_t\partial_su(t,S_t)+\frac{1}{2}\sigma(t,S_t)^2\partial^2_su(t,S_t)\right)dt+\sigma(t,S_t)\partial_su(t,S_t)dW_t\\
&=\left((r_t-h_t)S_t\partial_su(t,S_t)-B'(t,S_t,u(t,S_t))\right)dt+\sigma(t,S_t)\partial_su(t,S_t)dW_t\\
&=\left((r_t-h_t)S_t\frac{Z_t}{\sigma(t,S_t)}-\pi_t(S_t)-\theta_t(V_t)-(f_t(\alpha_t-1)-\lambda_t-c_t\alpha_t)V_t)\right)dt+Z_tdW_t
\end{aligned}
\end{equation*}
Hence we found the following:
\begin{teo}[Solution to the Valuation Equation]\label{teo:existence}
Let $S_t$ be the solution to equation \eqref{forward} and $u(t,s)$ the classical solution to equation \eqref{PDEdependent}. Then the process $(S_t,u(t,S_t),\sigma(t,S_t)\partial_su(t,S_t))$ is the unique solution to equation \eqref{flowFBSDE}.
\end{teo}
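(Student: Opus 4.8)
The plan is to split the statement into an existence part, which is essentially the It\^o computation already displayed before the theorem, and a uniqueness part, which I would reduce to Theorem~\ref{teo:classicsolution} by transferring the delta-hedging term from the backward drift into the forward drift, i.e. by passing from \eqref{flowFBSDE} to \eqref{FBSDE_h}.

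For existence, the first thing I would record is that, once the delta-hedging choice $\widetilde H_t=S_tZ_t/\sigma(t,S_t)$ is inserted, the term $-(r_t-h_t)H$ inside $B$ becomes $-(r_t-h_t)\,s\,\partial_su$ along any smooth candidate $u$, so that PDE~\eqref{PDEdependent} collapses exactly onto PDE~\eqref{PDEindependent}; consequently the classical $C^{1,2}$ solution $u$ is precisely the one delivered by Theorem~\ref{Lipschitz}. I would then let $S_t$ solve \eqref{forward} (with drift $r_t$), set $V_t=u(t,S_t)$ and $Z_t=\sigma(t,S_t)\partial_su(t,S_t)$, apply It\^o's formula to $u(t,S_t)$, and substitute $\partial_tu+\tfrac12\sigma^2\partial^2_su=-h_t s\,\partial_su-B'(t,s,u)$ from \eqref{PDEindependent}. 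This is exactly the chain of equalities shown immediately before the theorem, and it identifies the drift of $dV_t$ with $-B(t,S_t,V_t,Z_t)$ under the delta-hedging choice, while $V_T=u(T,S_T)=\Phi(S_T)$ matches the terminal datum. A side check I would include is that $Z_t=\sigma(t,S_t)\partial_su(t,S_t)$ lies in the correct $L^2$ space, using the linear growth of $\sigma$ and the bounds on $u$ and $\partial_su$ coming from Theorem~\ref{Lipschitz}.

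For uniqueness I would exploit that the term just removed is linear in $Z_t$, so it can be absorbed into the forward drift. Given any solution $(S,V,Z)$ of \eqref{flowFBSDE}, I would rewrite the backward dynamics as $dV_t=-B'(t,S_t,V_t)\,dt+Z_t\,d\widetilde W_t$, where $\widetilde W_t=W_t+\int_0^t(r_u-h_u)\tfrac{S_u}{\sigma(u,S_u)}\,du$, and change measure by Girsanov so that $\widetilde W$ is a Brownian motion. Under the new measure the forward drift becomes $h_tS_t$ and the driver becomes $B'$, so $(S,V,Z)$ solves \eqref{FBSDE_h}, whose solution is unique by Theorem~\ref{teo:classicsolution}. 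Since the two measures are equivalent, this forces $V_t=u(t,S_t)$ and $Z_t=\sigma(t,S_t)\partial_su(t,S_t)$ almost surely, which is exactly the triple built in the existence step.

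The main obstacle is making this last change of measure rigorous: the Girsanov exponent involves $\beta_t=(r_t-h_t)S_t/\sigma(t,S_t)$, which need not be bounded, since $\sigma$ is only assumed non-degenerate and $C^2$ with bounded derivatives rather than comparable to $s$. I would therefore verify a Novikov-type condition (or use a localisation argument) ensuring that the candidate density is a true martingale and that the $L^2$-integrability required by Theorem~\ref{teo:classicsolution} is preserved under the measure change; the boundedness of the rates together with the sub-Gaussian tails of $S_t$, which follow from the bounded diffusion coefficient and linear drift, are what I would lean on here. Everything else reduces to the Lipschitz and growth bounds already recorded in Theorems~\ref{Lipschitz_viscosity} and \ref{Lipschitz}.
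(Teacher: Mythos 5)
Your existence half is exactly the paper's own argument: the paper proves Theorem \ref{teo:existence} by pointing at the It\^o computation displayed immediately before the statement, with $u$ the $C^{1,2}$ solution supplied by Theorem \ref{Lipschitz}, and your observation that under the delta-hedge $H(t,s,v,z)=sz/\sigma(t,s)$ the PDE \eqref{PDEdependent} collapses onto \eqref{PDEindependent} is a useful bridge that the paper leaves implicit (its statement refers to \eqref{PDEdependent} while its computation uses \eqref{PDEindependent}). Where you genuinely diverge is uniqueness: the paper disposes of it in one line by invoking Theorem \ref{Lipschitz_viscosity}, i.e.\ the Pardoux--Peng-type uniqueness statement for \eqref{flowFBSDE} itself, with no change of measure. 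Your Girsanov reduction of \eqref{flowFBSDE} to \eqref{FBSDE_h} followed by Theorem \ref{teo:classicsolution} is a heavier but not unmotivated alternative: the delta-hedging driver contains $(r_t-h_t)S_tZ_t/\sigma(t,S_t)$, whose Lipschitz constant in $z$ is $|r_t-h_t|\,s/\sigma(t,s)$ and hence not uniform in $s$ unless $\sigma$ grows linearly; this is precisely why the paper cannot apply Theorem \ref{teo:classicsolution} directly, and the same caveat in fact shadows the paper's own appeal to Theorem \ref{Lipschitz_viscosity}. Your route trades that difficulty for a measure change under which the driver $B'$ is honestly Lipschitz.

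The trade has a price that you flag but do not pay, so as written your uniqueness half has a real gap. First, the Girsanov density with exponent $\beta_t=(r_t-h_t)S_t/\sigma(t,S_t)$ must be a \emph{true} martingale; Novikov is not automatic here, and localisation alone never upgrades a local-martingale density into a legitimate change of probability. Second, your supporting claim that $S_t$ has sub-Gaussian tails ``from the bounded diffusion coefficient'' is inconsistent with the standing assumptions: $\sigma$ is only $C^2$ with bounded derivatives, hence of linear growth rather than bounded (e.g.\ $\sigma(t,s)=\sqrt{1+s^2}$ is admissible), so $S_t$ can have lognormal-type tails. What would save the argument is the compensation built into $\beta$ itself: when $\sigma$ grows linearly, $\beta$ is bounded; when $\sigma$ is bounded, $S_t$ is indeed sub-Gaussian and Novikov holds on small subintervals; an honest proof must run this dichotomy (or an interpolation between the two regimes) explicitly. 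Third, Theorem \ref{teo:classicsolution} gives uniqueness within the square-integrable solution class under the \emph{new} measure, so you must also verify that an arbitrary $L^2(\QQ)$ solution of \eqref{flowFBSDE} remains square integrable under $\QQ^h$ before uniqueness applies, and then transfer the identification $V_t=u(t,S_t)$, $Z_t=\sigma(t,S_t)\partial_su(t,S_t)$ back through the equivalence of measures. None of these steps is needed in the paper's one-line argument; your alternative is conceptually sound and arguably more robust where the paper is cavalier, but it is not complete as it stands.
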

\begin{proof}
	From the reasoning above we found that
  $(S_t,u(t,S_t),\sigma(t,S_t)\partial_su(t,S_t))$ solves the equation
  \eqref{flowFBSDE}. Then from Theorem \ref{Lipschitz_viscosity} we know that equation
  \eqref{flowFBSDE} has a unique solution and hence we have the thesis.
\end{proof}

\begin{oss}\label{oss:delta}
	Since we proved that $V_t=u(t,S_t)$ with $u(t,s)\in C^{1,2}$, the reasoning we used, when saying that $\widetilde{H}_t=S_t\frac{Z_t}{\sigma(t,S_t)}$ represented choosing a delta-hedge, it's actually more than an heuristic argument. 
\end{oss}

  Moreover since \eqref{PDEindependent} does not depend on the risk-free rate $r_t$, we can state the following:

\begin{teo}[Invariance Theorem]
If we are under the assumptions of Theorem \ref{Lipschitz}  and we assume that we are backing our deal with a delta
hedging strategy, then the price $V_t$ can be calculated via the semilinear PDE \eqref{PDEindependent} and does \emph{not} depend on the risk-free rate $r(t)$.

Moreover if in analogy with the just mentioned classical case we choose $H(t,s,u(t,s),Z_t)=S_t\frac{Z^{t,s}_t}{\sigma(t)}$ under the weaker assumptions of Theorem \ref{Lipschitz_viscosity}, we still have that the price is a viscosity solution of equation \eqref{PDEindependent} and hence does \emph{not} depend on the risk-free rate $r(t)$.
\end{teo}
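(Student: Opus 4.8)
The plan is to prove the two assertions separately, since the classical claim is essentially a corollary of Theorems~\ref{Lipschitz} and~\ref{teo:existence} already in place, whereas the viscosity claim needs one structural observation together with a drift–transfer argument. For the classical claim I would simply collect what has been established. Theorem~\ref{Lipschitz} furnishes, under its hypotheses, a unique $u\in C^{1,2}$ solving \eqref{PDEindependent}, and the point to stress is that the coefficients of \eqref{PDEindependent} are only $\sigma$, $h_t$ and $B'(t,s,v)=\pi_t(s)+\theta_t(v)-\lambda_tv+f_t(\alpha_t-1)v-c_t\alpha_tv$, none of which contains $r_t$. Theorem~\ref{teo:existence} then identifies the triple $(S_t,u(t,S_t),\sigma(t,S_t)\partial_su(t,S_t))$, with $S_t$ the $r$-driven forward process \eqref{forward}, as the unique solution of the delta-hedged FBSDE \eqref{flowFBSDE}. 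Hence the function solving the $r$-free PDE is exactly the one representing the price, and invariance is immediate: varying $r_t$ changes the law of $S_t$ but leaves the map $u$ untouched, so the value written as a function of the observed underlying does not feel the risk-free rate. The mechanism behind this is the cancellation already visible in Section~\ref{sec:markovian}: substituting the delta choice $z=\sigma\partial_su$ into the hedging term turns $-(r_t-h_t)H$ into $-(r_t-h_t)s\partial_su$, which combines with the transport term $r_ts\partial_su$ of \eqref{PDEdependent} to leave precisely $h_ts\partial_su$, so that \eqref{PDEdependent} collapses identically onto \eqref{PDEindependent}.

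For the viscosity claim I would drop $C^{1,2}$ regularity and work at the level of the FBSDE. With $H=sz/\sigma$ the backward generator of \eqref{flowFBSDE} becomes linear in $z$, but the coefficient $(r_t-h_t)s/\sigma(t,s)$ is not bounded in $s$, so the generator fails to be uniformly Lipschitz in $z$ and Theorem~\ref{Papeng} cannot be applied to \eqref{flowFBSDE} directly; this is the main obstacle. The remedy is the very drift transfer used to pass from \eqref{flowFBSDE} to \eqref{FBSDE_h}: rewriting $(r_t-h_t)\tfrac{sz}{\sigma}\,dt+z\,dW_t=\tfrac{z}{\sigma}\bigl[(r_t-h_t)s\,dt+\sigma\,dW_t\bigr]$ and absorbing the extra drift into the forward equation — equivalently, a Girsanov passage to the measure $\QQ^h$ under which $S$ has drift $h_ts$ — removes the offending term and replaces $B$ by the $z$-independent, Lipschitz generator $B'$. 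I would then verify for \eqref{FBSDE_h} the hypotheses of Theorem~\ref{Papeng} exactly as in the proof of Theorem~\ref{Lipschitz_viscosity} (forward coefficients $h_ts$ and $\sigma$ Lipschitz with linear growth; $B'$ Lipschitz in $(v,z)$ with the growth bounds on $B'(t,s,0,0)$ and on $\Phi$), concluding that \eqref{PDEindependent} admits a unique viscosity solution $u$ with $V_t^{t,s}=u(t,s)$. Since \eqref{PDEindependent} carries no $r_t$, the price is once more independent of the risk-free rate.

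The step demanding genuine care is the legitimacy and meaning of the drift transfer in the viscosity setting. I must either check that the change of measure is admissible — for instance that $S/\sigma$ has at most linear growth and that the associated Dol\'eans density is a true martingale (a Novikov-type condition, which is delicate since under linear-growth $\sigma$ the term $\frac{(r_t-h_t)S_t}{\sigma}$ may grow linearly in $S$) — or, preferably, bypass the measure change altogether and argue only that \eqref{FBSDE_h} is a bona fide decoupled FBSDE whose associated Pardoux--Peng PDE is \eqref{PDEindependent}, so that its unique solution coincides with the price through the structural identity \eqref{PDEdependent}$\equiv$\eqref{PDEindependent} recorded above together with uniqueness from Theorem~\ref{Papeng}. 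Once this identification is secured — it is automatic in the classical regime via the Ito verification of Theorem~\ref{teo:existence} — the whole invariance statement reduces, in both regularity classes, to the single observation that \eqref{PDEindependent} does not contain $r_t$.
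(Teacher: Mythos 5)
Your proposal is correct and follows essentially the same route as the paper: the classical claim is read off from Theorems \ref{Lipschitz} and \ref{teo:existence} together with the observation that \eqref{PDEindependent} contains no $r_t$, and the viscosity claim rests on the same drift transfer from \eqref{flowFBSDE} to \eqref{FBSDE_h} followed by an application of Theorem \ref{Papeng}. The only notable difference is that you explicitly flag the identification of the price with the value process of \eqref{FBSDE_h} in the viscosity regime (via Girsanov or otherwise) as a step requiring justification, whereas the paper states the theorem as an immediate corollary and passes over that point in silence.
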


This invariance result shows that even when starting from a risk neutral valuation theory, the risk free rate disappears from the nonlinear valuation equations. 

\section{Conclusions: Nonlinear deal-dependent  measures and discouting}\label{sec:conclu}
Using a Feynman-Kac type argument formally, but with the confidence coming from the above result on existence and uniqueness of  solutions,  we can also write the valuation formula as
\[ \hspace{-1cm} {V}_t
 =  \int_t^T \,\EE^h\{  D(t,u;f) [ \pi_u + (\tilde{\theta}_u - \lambda_u V_u) + ( {f}_u - c_u ) C_u ] |{\cal F}_t \} du . \]
Related formulas were introduced in previous papers such as \cite{pallavicini2011funding} and \cite{pallavicini2012funding}.
While this formula stays as close as possible to classical risk neutral valuation, we can see immediately where we depart from the usual setting. $\EE^h$ is the expectation associated with $\mathbb{Q}^h$,  the probability measure where the drift of the risky assets is the repo rate $h$. This repo rate depends on $H$ and hence on $V$ itself. This confirms nonlinearity and can be further interpreted as a deal-dependent pricing measure. The pricing measure depends on whether the repo will be long or short in the future, as rates $h$ could be different in the two cases, and on the specific repo portfolio adopted for the trade under consideration. This is visible also in (\ref{FBSDE_h}), where the drift of $S$ is $h$. Furthermore, we ``discount at funding". Note that ${f}$ depends on ${V}$ possibly. This is another potential source of non-linearity, that is here interpreted as ``nonlinear discounting". In other terms we have a  deal dependent discount curve.   
We recall that $\theta_u$ are trading CVA and DVA after collateralization and can be nonlinear under replacement closeout.
Finally, $( {f}_u - {c}_u ) C_u$ is the cost of funding collateral with the treasury, and can be nonlinear as well. We have been able to assert an invariance theorem, confirmed by the fact that also in the valuation formula with $\EE^h$ there is no risk-free rate $r$, but we cannot avoid the nonlinearities in case of asymmetric borrowing/lending rates or in case of replacement closeout at default. In case linearization is enforced, the related error should be controlled with quantities related to the nonlinearity valuation adjustment (NVA) introduced in \cite{BrigoPallaviciniPedersen}.
Further discussion on consequences of non-linearity and invariance on valuation in general, on the operational procedures of a bank, on the legitimacy of fully charging the nonlinear value to a client and on the related dangers of overlapping valuation adjustments is presented elsewhere, see for example again \cite{BrigoPallavicini2014}, \cite{BrigoPallaviciniPedersen} and references therein.

\nocite{}
\bibliographystyle{plain}
\addcontentsline{toc}{chapter}{Bibliography}

\end{document}